\newtheorem{theorem}{Theorem}
\newtheorem{lemma}{Lemma}
\newtheorem{definition}{Definition}
\newtheorem{proposition}{Proposition}
\newtheorem{corollary}{Corollary}
\DeclareCiteCommand{\parencite}[\mkbibparens]
  {\usebibmacro{prenote}}
  {\usebibmacro{citeindex}%
    \printtext[bibhyperref]{\usebibmacro{cite}}}
  {\multicitedelim}
  {\usebibmacro{postnote}}
\DeclareCiteCommand*{\parencite}[\mkbibparens]
  {\usebibmacro{prenote}}
  {\usebibmacro{citeindex}%
    \printtext[bibhyperref]{\usebibmacro{citeyear}}}
  {\multicitedelim}
  {\usebibmacro{postnote}}
\DeclareCiteCommand{\footcite}[\mkbibfootnote]
  {\usebibmacro{prenote}}
  {\usebibmacro{citeindex}%
  \printtext[bibhyperref]{ \usebibmacro{cite}}}
  {\multicitedelim}
  {\usebibmacro{postnote}}
\DeclareCiteCommand{\footcitetext}[\mkbibfootnotetext]
  {\usebibmacro{prenote}}
  {\usebibmacro{citeindex}%
   \printtext[bibhyperref]{\usebibmacro{cite}}}
  {\multicitedelim}
  {\usebibmacro{postnote}}
\global\boolfalse{cbx:parens}}
\begin{document}

%

%

\twocolumn[

\aistatstitle{Sampling from Multiscale Densities with Delayed Rejection Generalized Hamiltonian Monte Carlo}

\aistatsauthor{ Gilad Turok \And Chirag Modi \And  Bob Carpenter }

\aistatsaddress{Flatiron Institute \And  Flatiron Institute, New York University \And Flatiron Institute }
]

\begin{abstract}
  Hamiltonian Monte Carlo (HMC) is the mainstay of applied Bayesian inference for differentiable models. However, HMC still struggles to sample from hierarchical models that induce densities with multiscale geometry: a large step size is needed to efficiently explore low curvature regions while a small step size is needed to accurately explore high curvature regions. We introduce the delayed rejection generalized HMC (DR-G-HMC) sampler that overcomes this challenge by employing dynamic step size selection, inspired by differential equation solvers. In generalized HMC, each iteration does a single leapfrog step. DR-G-HMC sequentially makes proposals with geometrically decreasing step sizes upon rejection of earlier proposals. This simulates Hamiltonian dynamics that can adjust its step size along a (stochastic) Hamiltonian trajectory to deal with regions of high curvature. DR-G-HMC makes generalized HMC competitive by decreasing the number of rejections which otherwise cause inefficient backtracking and prevents directed movement. We present experiments to demonstrate that DR-G-HMC (1) correctly samples from multiscale densities, (2) makes generalized HMC methods competitive with the state of the art No-U-Turn sampler, and (3) is robust to tuning parameters.
\end{abstract}

\section{INTRODUCTION}

\begin{figure*}[!tbp]
  \begin{subfigure}[b]{0.63\textwidth}
    \includegraphics[width=\textwidth]{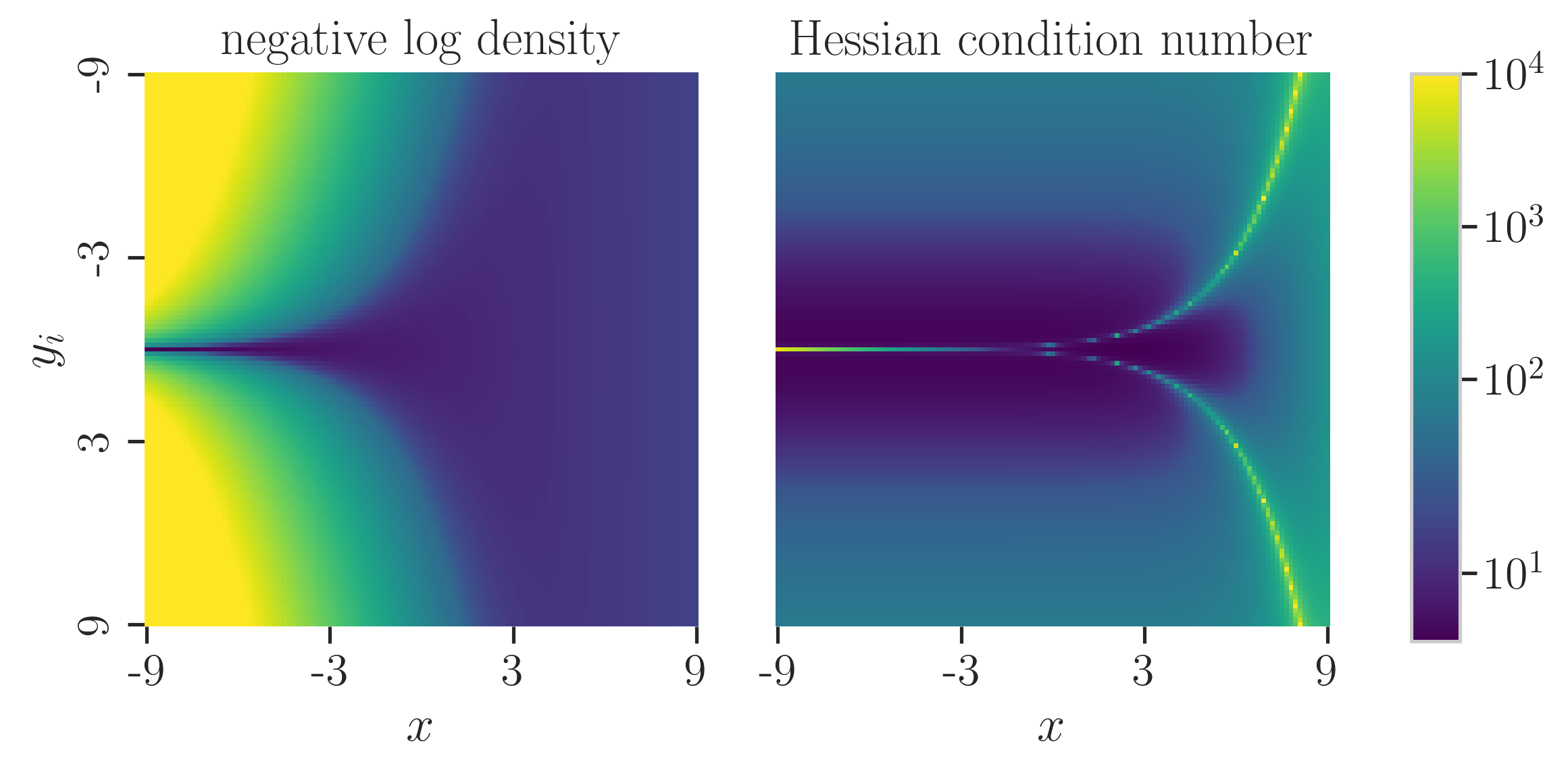}
    \caption{}
    \label{fig:funnel_multiscale}
  \end{subfigure}
  \hfill
  \begin{subfigure}[b]{0.35\textwidth}
    \includegraphics[width=\textwidth]{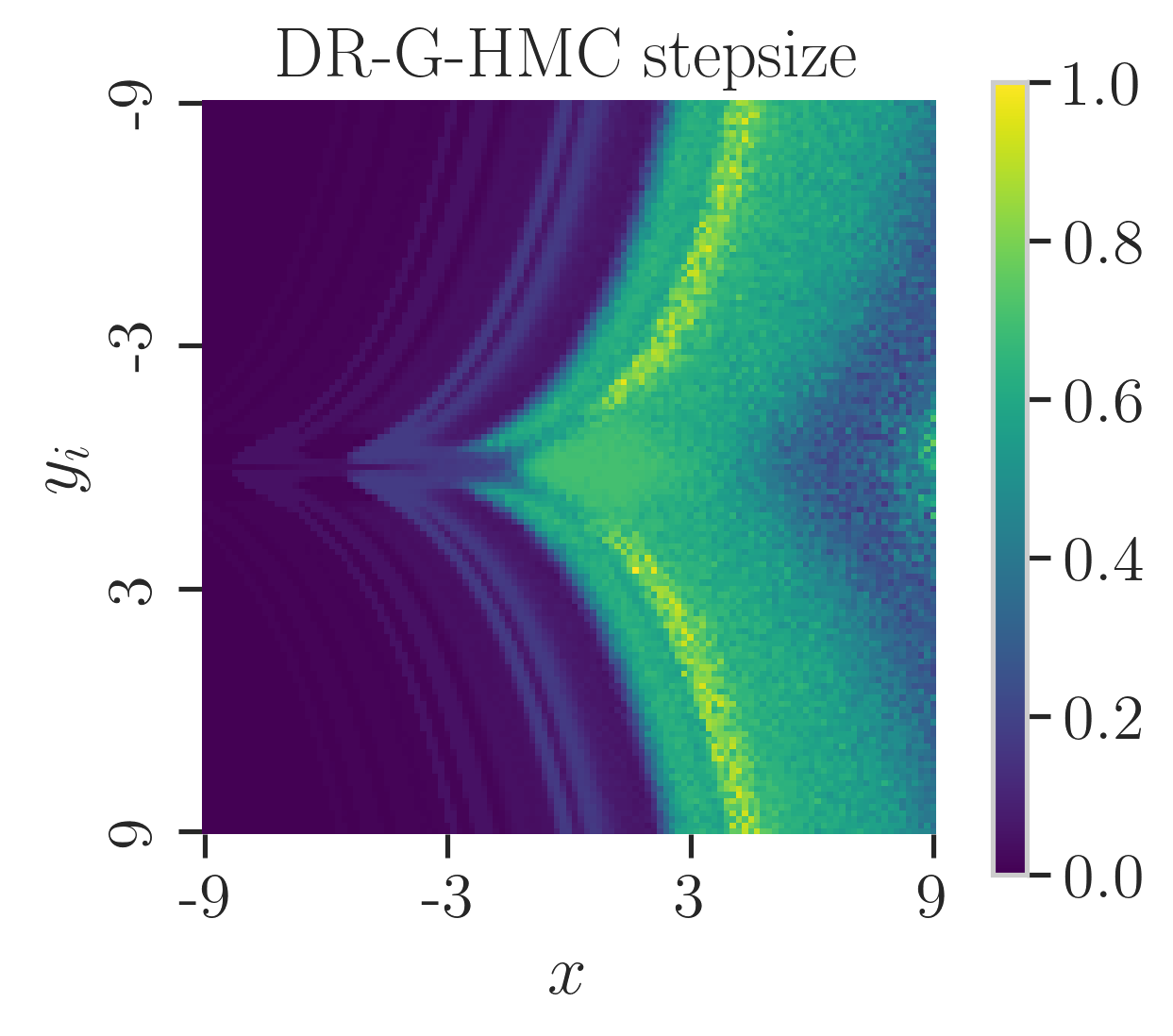}
    \caption{}
    \label{fig:funnel_drghmc_stepsize}
  \end{subfigure}
  \caption{\textbf{(a) Neal's funnel exhibits multiscale geometry.} Neal's funnel is challenging to sample from because its (negative) log density and Hessian condition number vary by orders of magnitude throughout the space. \textbf{(b) DR-G-HMC handles multiscale geometry with dynamic step sizes.} To sample from Neal's funnel, DR-G-HMC uses large step sizes in low-curvature regions ($x \gg 0$) and small step sizes in high curvature regions ($x \ll 0$). See \autoref{app:funnel_fig_details} for figure details.}
  \label{fig:funnel_intro}
\end{figure*}

Despite the success of Hamiltonian Monte Carlo (HMC) (\cite{duane1987hybrid}; \cite{neal2011mcmc}; \cite{betancourt2017conceptual}) and its auto-tuned variants like the No-U-Turn Sampler (NUTS) (\cite{hoffman2014no}), these algorithms struggle to sample from densities with multiscale geometry. Such densities are characterized by varying curvature in which scales can differ by orders of magnitude throughout the space. Multiscale geometry is challenging because we need small HMC step sizes in high-curvature regions for numerical stability, whereas we require large step sizes in flat regions for efficient sampling. This pathology commonly arises in hierarchical models (\cite{betancourt2015hamiltonian}; \cite{pourzanjani2019implicit}) in which small changes to a top-level parameter may induce large changes to parameters lower in the hierarchy. Neal's funnel is a density that exemplifies this multiscale behavior (see \autoref{fig:funnel_multiscale}). Neal's funnel contains a neck region, with high density and small volume, that opens to a mouth region, with low density and large volume. HMC methods with a fixed step size, like  NUTS, will struggle to sample from such densities.

\textcite{Modi_2023} proposed delayed rejection HMC (DR-HMC) to sample from multiscale densities. Instead of falling back to the original state upon rejection, delayed rejection methods make additional proposals in the \textit{same} iteration using a \textit{different} proposal kernel. If chosen appropriately, this new proposal kernel can lead to a better chance of acceptance. Motivated with time-step adaptive integrators for differential equations, DR-HMC makes proposals with increasingly smaller leapfrog step sizes, resulting in \textit{dynamic}, per-iteration step size selection. With these proposal kernels, DR-HMC can simulate more accurate Hamiltonian dynamics in high curvature regions, generating proposed states that are more likely to be accepted.

Although DR-HMC can successfully sample from multiscale densities, it introduces a few inefficiencies. Particularly, DR-HMC uses \textit{small} step sizes to move across the entirety of \textit{long} HMC trajectories, even if the small step size is necessary only for a short segment of the full trajectory. This makes DR-HMC iterations expensive, unable to compete with state of the art samplers like  NUTS. To remedy this issue, we introduce the delayed rejection generalized HMC (DR-G-HMC) sampler.

Generalized HMC (G-HMC) (\cite{horowitz1991generalized}) partially resamples the momentum variable and (typically) simulates Hamiltonian dynamics for a single leapfrog step. Unlike HMC, G-HMC poses a Metropolis accept/reject step after every leapfrog step which avoids long and expensive trajectories that may ultimately be rejected. Hence a delayed proposal in DR-G-HMC is made with a smaller step size only for the single leapfrog step \emph{where it is necessary}, unlike DR-HMC, which uses a smaller step size for the entire trajectory.

DR-G-HMC not only improves the computational efficiency of DR-HMC in robustly sampling multiscale distributions, it also overcomes longstanding inefficiencies in G-HMC. 
To maintain detailed balance in G-HMC, the momentum is flipped (negated) at the end of a sampling iteration. As a result, if a proposed state is rejected, a G-HMC chain reverses course. This results in less directed movement and increased autocorrelation for G-HMC samples, making G-HMC less efficient than HMC in practice. These rejections can be minimized by using an extremely small step size to more accurately simulate Hamiltonian dynamics, but this reduces the movement per-iteration and remains inefficient. 
DR-G-HMC provides an alternate way to minimize these rejections by making delayed proposals that increase the chances of acceptance. This minimizes the number of trajectory reversals while simultaneously being able to use large step sizes for the first proposals, making G-HMC competitive with state of the art samplers like  NUTS.

Our primary contributions are as follows:

\begin{itemize}
    \item We introduce the DR-G-HMC sampler that combines delayed rejection and generalized HMC while satisfying detailed balance.
    \item We demonstrate that DR-G-HMC robustly samples multiscale densities induced by hierarchical models, where it outperforms previous samplers like DR-HMC and NUTS.
    \item We show that DR-G-HMC solves the directed motion problem of generalized HMC, making it competitive with state of the art samplers like NUTS in sampling from densities without complex multiscale geometry.
    \item We present experiments to show that the performance of DR-G-HMC is insensitive to tuning different hyperparameters of the algorithm.
\end{itemize}

\section{RELATED WORK}

\subsection{Generalized HMC} G-HMC, introduced by \cite{horowitz1991generalized}, is less efficient than HMC due to frequent reversals upon rejections in the Markov chain Monte Carlo chain. \textcite{neal2020nonreversibly} proposed a scheme to cluster the acceptances together, allowing directed movement and hence higher efficiency.  This method was later auto-tuned by \textcite{MEADs} using many parallel chains. DR-G-HMC is an alternative approach to Neal's for fixing the directed motion problem of G-HMC, with the added benefits of multiscale sampling and insensitivity to tuning parameters.

\subsection{Delayed Rejection}  Delayed rejection methods have been studied for Metropolis-Hastings algorithms (\cite{tierney1999some}; \cite{green2001delayed}; \cite{haario2006dram}; \cite{bedard2014scaling}) but have only recently been adapted to HMC samplers. \textcite{Modi_2023} apply delayed rejection to HMC in the DR-HMC sampler to robustly sample from multiscale densities. Although DR-HMC and DR-G-HMC share similarities, DR-G-HMC makes notable efficiency gains by using a smaller step size only where necessary. \textcite{sohl2014hamiltonian} and \textcite{campos2015extra} present a different flavor of delayed rejection with the goal of increasing the autocorrelation length in G-HMC. These samplers \textit{continue} the trajectory forward upon rejection with the same proposal kernel, which will fail if rejection is due to change in curvature (i.e., a diverging Hamiltonian). In contrast, we apply delayed rejection to G-HMC by \textit{retrying} from the same starting point with a different proposal kernel. This key difference allows our method to sample from multiscale densities.

\subsection{Step-size Selection} Several approaches have been proposed for step-size selection in Markov chain Monte Carlo samplers (theoretical: \cite{atchade2006adaptive}; \cite{marshall2012adaptive}; \cite{gelman1997weak}, adaptive:  \cite{hoffman2014no}, empirical: \cite{coullon2023efficient}; \cite{campbell2021gradient}). These approaches all struggle with multiscale densities because they fix a \textit{single} step size. Other work selects \textit{dynamic} step sizes that adapt to local curvature at every sampling iteration. Previous approaches include generalizing from Euclidean space to Riemmanian manifolds to explicitly deal with curvature (\cite{girolami2011riemann}; \cite{betancourt2015hamiltonian}) and applying implicit symplectic integrators that can adapt to the stiffness induced by multiscale densities (\cite{pourzanjani2019implicit}; \cite{brofos2021evaluating}).\footnote{Diagonal metrics are equivalent to a per-dimension step size; dense metrics account for curvature.} However both methods are computationally expensive, require implicit integrators that are hard to tune, and the former requires a Riemannian (i.e., positive definite) metric such as Fisher information.  Cheaper dynamic step size selection is performed in \textcite{kleppe2016adaptive} by bounding the error in the Hamiltonian and in \textcite{autoMALA} by sampling from a step-size distribution that maintains reversibility.

\section{BACKGROUND}\label{sec:background}

We seek to generate samples from a differentiable and potentially unnormalized probability density function $\pi(\theta)$ with parameters $\theta \in \mathbb{R}^D$. The target density $\pi$ is often a Bayesian posterior.

\subsection{Hamiltonian Monte Carlo} HMC generates samples by treating $\theta$ as position vector and introducing an auxiliary momentum vector $\rho \in \mathbb{R}^D$ for a fictitious particle with mass matrix $M$. The position and momentum define a Hamiltonian $$H(\theta, \rho) =  - \log \pi (\theta) + \frac{1}{2} \rho^T M^{-1} \rho$$ with corresponding Gibbs density 
\begin{align}\label{eq:gibbs}
    \tilde{\pi}(\theta, \rho) &\propto \exp(-H(\theta, \rho)) 
    \\ & = \pi(\theta) \cdot \text{normal}(\rho \mid 0, M), \notag
\end{align}
for some symmetric, positive-definite mass matrix $M$. HMC generates samples $(\theta, \rho)$ in this augmented state space as a Metropolis-within-Gibbs sampler: a Gibbs step updates the momentum $\rho$ and a Metropolis-Hastings step updates the position $\theta$. From samples $(\theta, \rho)$ we can easily recover samples $\theta$ from the target density $\pi$ by extracting the first $D$ coordinates.

To generate the next state from the current state $(\theta, \rho)$, first Gibbs resample the momentum,  $\rho^\prime \sim \textrm{normal}(0, M)$. Then perform the Metropolis update as follows: generate a proposed state by simulating Hamiltonian dynamics, approximated with leapfrog integration, then flip (negate) the momentum. To do so, define the proposal map $F = P L_\epsilon^n$ as the composition of a momentum flip $P(\theta, \rho) = (\theta, -\rho)$ and $n$ leapfrog steps $L_\epsilon^n$ with step size $\epsilon$. The proposed state $(\theta^{\mathrm{pr}}, \rho^{\mathrm{pr}}) = F(\theta, \rho^\prime)$ is accepted with probability 
$$\alpha = \min\!\left(1, \frac{\pi(\theta^{\mathrm{pr}}) \cdot \textrm{normal}(\rho^{\mathrm{pr}} \mid 0, M)}{\pi(\theta) \cdot \textrm{normal}(\rho^\prime \mid 0, M)} \right),$$  otherwise we remain at the current state $(\theta, \rho^\prime)$.

\subsection{Generalized Hamiltonian Monte Carlo} Instead of completely resampling the momentum as $\rho^\prime \sim \textrm{normal}(0, M)$, G-HMC partially updates the momentum as $\rho^\prime \sim \textrm{normal}(\rho \sqrt{1 - \gamma }, \gamma M)$ for some damping parameter $\gamma \in (0, 1]$ (\cite{horowitz1991generalized}). The damping $\gamma$ controls how much the momentum changes between iterations, while leaving the distribution over $\textrm{normal}(0, M)$ invariant. 
Next, G-HMC generates a proposed state $(\theta^{\mathrm{pr}}, \rho^{\mathrm{pr}})$ with proposal map $F$ and accepts it with some probability $\alpha$.  Unlike HMC, most G-HMC algorithms use one leapfrog step per iteration ($n=1$), with partial momentum refreshment promoting directed movement. For the rest of this work, we will fix $n=1$ as it allows for immediate feedback when exploring a density. Lastly at the end of a G-HMC iteration the momentum is unconditionally negated. If the proposed state $(\theta^{\mathrm{pr}}, \rho^{\mathrm{pr}})$ is accepted, this undoes the negation from the proposal map $F$; if rejected, this negation makes the chain reverse direction. Thus G-HMC struggles to make directed motion because of inefficient reversals upon rejection. Standard HMC is unaffected by momentum negations because the momentum $\rho$ is completely resampled at every iteration.

\section{DELAYED REJECTION GENERALIZED HAMILTONIAN MONTE CARLO}\label{sec:drghmc}

\begin{algorithm*}[t]
	\caption{Delayed rejection generalized Hamiltonian Monte Carlo sampler} 
        \label{alg:DR-G-HMC}
	\begin{algorithmic}  
        \begin{multicols}{2}
        [\Require target density $\pi$, max proposals $K$, damping $\gamma$, initial step size $\epsilon$, reduction factor $r$, mass matrix $M$]

		\Function{sample}{$\theta, \rho$}
		\State $\rho^\prime \sim \textrm{normal}(\rho \sqrt{1 - \gamma}, \gamma M)$
		\For{$k$ in $1 \ldots K$}
		\State $\theta^{\mathrm{pr}}, \rho^{\mathrm{pr}} \leftarrow \textproc{propose}(\theta, \rho^\prime, k)$
		\State $ \alpha \leftarrow \textproc{accept}(\theta, \rho^\prime, \theta^{\mathrm{pr}}, \rho^{\mathrm{pr}}, k)$
		\State $u \sim \textrm{uniform}(0, 1)$
		\If{$u < \alpha$}
            \State \Return{$\textproc{flip}(\theta^{\mathrm{pr}}, \rho^{\mathrm{pr}})$}
		\EndIf
		\EndFor
            \State \Return{$\textproc{flip}(\theta, \rho^\prime)$}
		\EndFunction
            \\
             \Function{propose}{$\theta, \rho, k$}
            \State $\epsilon_k \leftarrow \epsilon / r ^ {k-1}$
		\State $\theta, \rho \leftarrow \textproc{leapfrog}(\theta, \rho, \epsilon_k)$
		\State \Return{$\textproc{flip}(\theta, \rho)$}
		\EndFunction
            \\
            \Function{flip}{$\theta, \rho$}
            \State \Return{$\theta, -\rho$}
            \EndFunction
            \columnbreak
            \Function{leapfrog}{$\theta, \rho, \epsilon$}
		\State $\rho^\prime \leftarrow \rho + \frac{\epsilon}{2} \nabla \log \pi(\theta)$
		\State $\theta^\prime \leftarrow \theta + \epsilon M^{-1} \rho^\prime$
		\State $\rho^{\prime \prime} \leftarrow \rho^\prime + \frac{\epsilon}{2} \nabla \log \pi (\theta^\prime)$
		\State \Return{$\theta^\prime, \rho^{\prime \prime}$}
		\EndFunction
            \\
            \Function{accept}{$\theta, \rho^\prime, \theta^{\mathrm{pr}}, \rho^{\mathrm{pr}}, k$}
		\State $\alpha \leftarrow \frac{\displaystyle \pi(\theta^{\mathrm{pr}}) \cdot \textrm{normal}(\rho^{\mathrm{pr}} \mid 0, M)}{\displaystyle \pi(\theta) \cdot \textrm{normal} (\rho^\prime \mid 0, M)}$
		\For {$i$ in $1 \ldots k - 1$}
		\State $\theta^{\mathrm{gh}}, \rho^{\mathrm{gh}} \leftarrow \textproc{propose}(\theta^{\mathrm{pr}}, \rho^{\mathrm{pr}}, i)$
		\State $\theta^{\mathrm{rej}}, \rho^{\mathrm{rej}} \leftarrow \textproc{propose}(\theta, \rho^\prime, i)$ \Comment{cached}
		\State $ \alpha \leftarrow \alpha \cdot \dfrac{ 1 - \textproc{accept}(\theta^{\mathrm{pr}}, \rho^{\mathrm{pr}}, \theta^{\mathrm{gh}}, \rho^{\mathrm{gh}}, i)}{ 1 - \textproc{accept}(\theta, \rho^\prime, \theta^{\mathrm{rej}}, \rho^{\mathrm{rej}}, i)} $
		\EndFor
		\State \Return{$\min(1, \alpha)$}
		\EndFunction
        \end{multicols}
	\end{algorithmic}
\end{algorithm*}

DR-G-HMC applies delayed rejection to G-HMC to allow smaller step sizes to be used if a larger step size proposal is rejected during a single sampling iteration. The pseudocode is provided in \autoref{alg:DR-G-HMC} and is briefly described here. More technical details, the proof of invariance over $\pi$, and the derivation of acceptance probability are presented in \autoref{app:drghmc}.

Every iteration of DR-G-HMC begins with a Gibbs update by partially updating the momentum as $\rho^\prime \sim \textrm{normal}(\rho \sqrt{1 - \gamma }, \gamma M)$, same as G-HMC. Then, the sampler can make up to a maximum of $K$ Metropolis-Hastings updates where each subsequent proposal is made only if the previous one is rejected.

For proposal $k$, we define a proposal map $F_k = P L_{\epsilon_k}^{n_k}$ with its own leapfrog step size $\epsilon_k$ and set the number of steps $n_k=1$, following G-HMC. The acceptance probability $\alpha_k$ of the $k$th proposed state $(\theta^{\mathrm{pr}}, \rho^{\mathrm{pr}}) = F_k(\theta, \rho^\prime)$ is evaluated to account for rejecting the previous $k-1$ proposals and is discussed below. If accepted, $(\theta^{\mathrm{pr}}, \rho^{\mathrm{pr}})$ becomes the next state of the Markov chain. Otherwise we delay rejection and make another proposal. If a maximum of $K$ proposals are made, we finally reject and set $(\theta, \rho^\prime)$ as the next state of the Markov chain. Lastly, the momentum is flipped regardless of whether any of the $K$ proposals are accepted or rejected in order to maintain reversibility.

To maintain detailed balance, the acceptance probability $\alpha_k$ for accepting or rejecting the $k$th proposed state 
$(\theta^{\mathrm{pr}}, \rho^{\mathrm{pr}}) = F_k(\theta, \rho^\prime)$ is given as
\begin{multline} \label{eq:accept_prob}
    \alpha_k(\theta, \rho^\prime, F_k(\theta, \rho^\prime)) \\
    {} = \min \bigg( 1, \frac{\pi( \theta^{\mathrm{pr}}) \cdot \textrm{normal}(\rho^{\mathrm{pr}} \mid 0, M)}{\pi(\theta) \cdot \textrm{normal}(\rho^\prime \mid 0, M)} \\
    \times \prod_{i=1}^{k-1}\frac{ 1 - \alpha_i \big( \theta^{\mathrm{pr}}, \rho^{\mathrm{pr}}, F_i(\theta^{\mathrm{pr}}, \rho^{\mathrm{pr}}) \big)}{ 1 - \alpha_i \big( \theta, \rho^\prime, F_i(\theta, \rho^\prime) \big)} \bigg).
\end{multline}
The full derivation of this acceptance probability is given in \autoref{app:accept_prob}, but we give a brief intuition here. The additional factor in $\alpha_k$ over the standard HMC acceptance probability is the product over $k$ terms. The denominator in this product is the probability of  rejecting the previous $k-1$ proposals originating from the current state $(\theta^{\mathrm{rej}}, \rho^{\mathrm{rej}}) = F_i(\theta, \rho^\prime)$. These terms have already been computed in previous proposals and are stored in a stack data structure, so they do not require additional computation. To maintain detailed balance, we need to balance it against the similar probability of rejecting the first $k-1$ proposals originating from the proposed state $(\theta^{\mathrm{gh}}, \rho^{\mathrm{gh}}) = F_i(\theta^{\mathrm{pr}}, \rho^{\mathrm{pr}})$. These are the terms in the numerator and require additional computation for every additional proposal. We call these ``ghost states'' and write $(\theta^{\mathrm{gh}}, \rho^{\mathrm{gh}})$, because they are the $i$-th proposal that would have been made and rejected in a hypothetical chain that had \textit{started} from proposed state $(\theta^{\mathrm{pr}}, \rho^{\mathrm{pr}})$ and gone to the current state $(\theta, \rho^\prime)$.


The number of ghost states to be evaluated grows exponentially as $\mathcal{O}(2^k)$ for the $k$th proposal. However these delayed proposals to higher orders are (a) typically made only in the difficult regions of the phase space, and (b) proportional to the number of leapfrog steps required by to traverse the region of high curvature.
Furthermore, these delayed proposals minimize the number of rejections which otherwise cause G-HMC trajectory to reverse due to the flipped momentum. As a result, the extra computational cost of delayed proposals is a favorable tradeoff and keeps DR-G-HMC  computationally competitive with auto-tuned samplers like  NUTS, even when step size reductions are not necessary.

If leapfrog step sizes $\epsilon_1 \ldots \epsilon_K$ are monotonically decreasing, then each subsequent proposal simulates more accurate Hamiltonian dynamics with a higher acceptance probability. This choice is \textit{especially} well suited to multiscale densities because it allows for dynamic, per-iteration step size selection: large step sizes are used in the wide, mouth-like regions while smaller step sizes are used in narrow, neck-like regions. This can be seen in \autoref{fig:funnel_drghmc_stepsize}. While the step size for each proposal can be chosen independently, in this work we choose a geometric progression for step size by reducing them with a factor $r$ at each stage: $\epsilon_k = \epsilon / r^{k-1}$, thereby reducing the number of tuning parameters.

Finally, unlike DR-HMC which retries subsequent proposals after rejecting a state $n_k$ leapfrog steps away from the current state, DR-G-HMC retries proposals after \textit{every} leapfrog step. This increases DR-G-HMC efficiency by (1) selecting  step sizes suited to the local curvature at \emph{every} point in a trajectory and (2) avoiding throwing away long trajectories upon rejection.

\section{EXPERIMENTS}

\begin{figure*}
    \centering
    \includegraphics[scale=0.075]{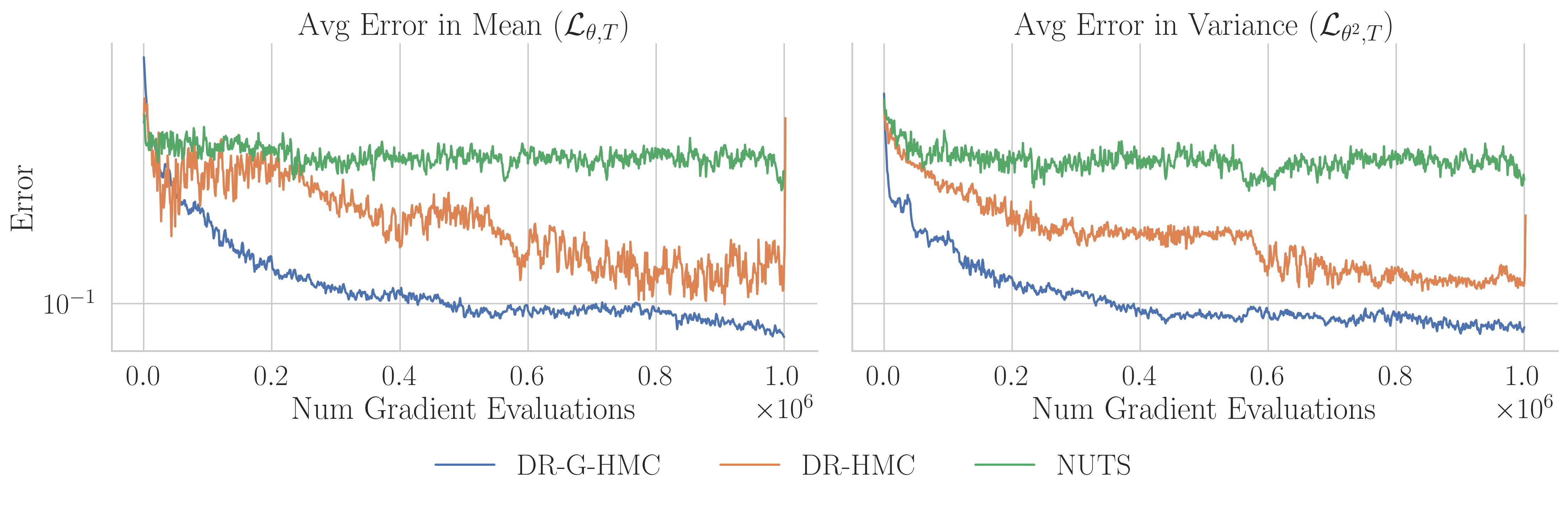}
    \caption{\textbf{Average error vs gradient evaluations for Neal's funnel.} Error in mean ($\mathcal{L}_{\theta, T}$) and variance ($\mathcal{L}_{\theta^2, T}$) averaged over $100$ chains of sampler draws from $10D$ Neal's funnel.  NUTS's error plateaus while delayed rejection methods do not. Our method DR-G-HMC achieves the lowest error.}
    \label{fig:funnel_error_vs_grad}
\end{figure*}

\begin{figure*}
    \centering
    \includegraphics[scale=0.075]{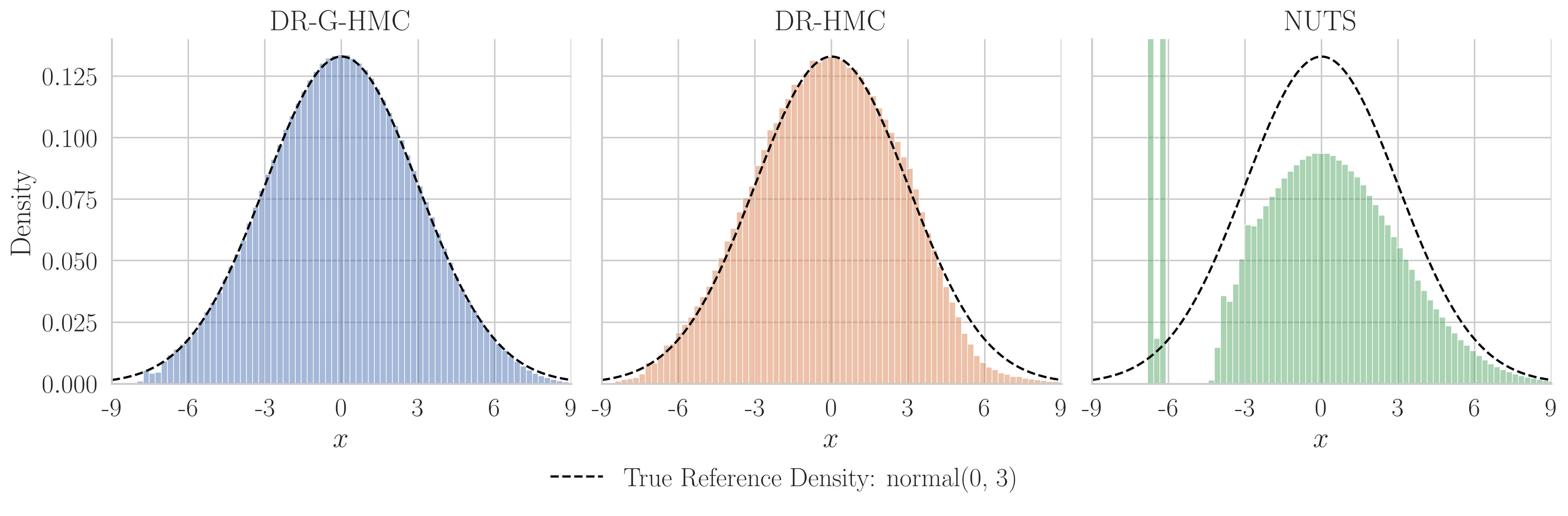}
    \caption{\textbf{Histogram of log scale parameter $x$ in Neal's funnel.} Sampler draws are aggregated across all $100$ chains of $10D$ Neal's funnel. Reference draws are sampled from the known density as $x \sim \textrm{normal}(0,3)$. DR-G-HMC and DR-HMC can sample deep into the highly curved neck ($x \ll 0$) and DR-G-HMC can sample deep into the mouth ($x \gg 0$) with dynamic step size selection, while  NUTS cannot.} 
    \label{fig:funnel_hist}
\end{figure*}

In this section we compare the performance of DR-G-HMC, DR-HMC, and NUTS samplers. Experimental code can be found at \href{https://github.com/gil2rok/drghmc}{https://github.com/gil2rok/drghmc}.

\subsection{Setup}\label{sec:setup}
\vspace*{-12pt}
\paragraph{Measuring Sampler Performance} We evaluate a sampler's performance by measuring absolute standardized error for a fixed computational budget. Since computational cost is dominated by gradient evaluations, we run all samplers with a budget of $T=10^6$ gradient evaluations. We do not use effective sample size (ESS) to measure relaxation because when a chain does not fully explore a challenging target density (e.g. in multiscale densities), the resulting bias is not captured by ESS. In contrast, the absolute standardized error between a sampler's draws and reference draws will detect this bias. For synthetic targets like Neal's funnel, we can generate reference samples independently from the target density. For other problems, we generates reference samples by running  NUTS with target acceptance rate of $0.95$, generating $10^6$ samples, and thinning until roughly independent (as measured by ESS).

Given draws $\theta^{(1)}, \ldots, \theta^{(N)} \in \mathbb{R}^D$, define a distribution 
$$\textstyle \textrm{U}(\theta \mid \theta_1, \ldots, \theta_N) = \sum_{n=1}^N \frac{1}{N} \, \textrm{I}(\theta = \theta_n).$$  
For a function $f:\mathbb{R}^D \rightarrow \mathbb{R}$, expectations are given by
$$\textstyle \mathbb{E}_{\textrm{U}(\theta \mid \theta_1, \ldots, \theta_N)}[f(\theta)] = \frac{1}{N} \sum_{n = 1}^N f(\theta_n).$$  
Given reference draws $\theta^{\textrm{ref}(1)}, \ldots, \theta^{\textrm{ref}(N)}$ and sample draws $\theta^{\textrm{test}(1)}, \ldots, \theta^{\textrm{test}(M)}$ to test generated with a budget of $t$ gradient evaluations, let $p^{\textrm{ref}}(\theta) = \textrm{uniform}(\theta \mid \theta^{\textrm{ref}(1)}, \ldots, \theta^{\textrm{ref}(N)})$ and $p^{\textrm{test}}$ be defined similarly.  For a given expectation function $f$, we define our reference answer as $\mathbb{E}_{p^{\textrm{ref}}}[f(\theta)]$ and our test answer as $\mathbb{E}_{p^{\textrm{test}}}[f(\theta)]$ and estimate the \textit{error} of the test draws as
$\mathbb{E}_{p^{\textrm{test}}}[f(\theta)] - \mathbb{E}_{p^{\textrm{ref}}}[f(\theta)]$.

To report a single statistic for every problem, we standardize the errors to the same scale by transforming to absolute Z score, and take the maximum across all dimensions to identify the worst performing parameter,
\begin{equation}
\mathcal{L}_{f,t}(\theta^{\textrm{test}}, \theta^{\textrm{ref}}) = \max_d 
\dfrac{\left| \mathbb{E}_{p^{\textrm{test}}}[f(\theta)] - \mathbb{E}_{p^{\textrm{ref}}}[f(\theta)]\right|}
      {\mathrm{sd}_{p^{\textrm{test}}}[f(\theta)]},
\end{equation}
where $\text{sd}$ is standard deviation.  We compute the absolute standardized error $\mathcal{L}_{f,t}$ for functions $f(\theta)=\theta_d$ and $f(\theta)=\theta_d^2$ to measure the error in the mean and variance estimates of each parameter.

\paragraph{Choice of Parameters} We generate parameter draws from  NUTS, DR-HMC, and DR-G-HMC samplers, each with $100$ chains. All chains are initialized from a randomly chosen sample from the reference samples, and the corresponding chains in all samplers are initialized from the same sample. Momentum is randomly initialized by sampling from $\text{normal}(0, M)$.

We run  NUTS with the probabilistic programming language Stan (\cite{carpenter2017stan}) using the default target acceptance of $0.80$, and a diagonal mass matrix $M$ to be adapted for all problems except Neal's funnel, where we found the identity matrix to perform better. We implement DR-HMC and DR-G-HMC ourselves and access Stan models with BridgeStan (\cite{Roualdes2023}). All experiments are run in parallel on $128$ CPU cores with $256$GB of memory.

Since DR-HMC and DR-G-HMC adapt step size dynamically as necessary, we run these samplers with an initial step size a factor $c$ larger than the  NUTS adapted step size, $\epsilon = c \epsilon_{\text{ NUTS}}$, for $c=2$. Both algorithms use the same mass-matrix as  NUTS to ensure fair comparison between algorithms. We keep the maximum proposals of $K=3$ and reduction factor of the step size, $\epsilon_k = \epsilon_1 / r^{k-1}$, to be $r=4$.
 
Unlike  NUTS, these algorithms do not adapt trajectory length automatically and hence this parameter needs to be chosen. DR-HMC uses the initial number of steps $n$ as the $90$th percentile of number of leapfrog steps from  NUTS (following \textcite{wu2018faster}).\footnote{Unlike \textcite{wu2018faster}, we do not jitter step size.} For the $k$th proposal, the number of steps are given by  and $n_k = \tau / \epsilon_k $ such that the trajectory length $\tau = \epsilon_1 n$ is maintained constant. For DR-G-HMC, we keep the $n_k = 1$ for all proposals (including the first one) and damping $\gamma = 0.08$. 

\subsection{Sampling From Multiscale Densities}\label{sec:multiscale}

We begin by evaluating the sampler performance on Neal's funnel, a synthetic multiscale density introduced in \textcite{neal2003slice}. A $D$-dimensional funnel defines the density with parameters $x \in \mathbb{R}, y \in \mathbb{R}^{D-1}$, as
$$
    \pi(x, y) = \textrm{normal}(x \mid 0, 3) \prod_{i=1}^{D-1} \textrm{normal}(y_i \mid 0, \exp(x / 2)).
$$
The funnel is extremely difficult to sample because its curvature, scale, and condition number vary dramatically throughout the density. This arises from the log scale parameter $x$ that determines the spread of latent parameters $y_i \sim \textrm{normal}(0, \exp(x/2))$.  Values of $x < 0$ lead to high curvature in $y$ (the neck) while values of $x > 0$ lead to very low curvature in $y$ (the mouth).  Both regions are poorly conditioned and it is a challenge to sample \textit{both} with a fixed step size.

\begin{table*}
\centering
\begin{tabular}{c c c}
\toprule
    \textbf{Target Posterior} & \textbf{Description} & \textbf{Dimension} \\
    \midrule
    eight schools & Centered parameterization, hierarchical model & 9 \\
    normal100 & Correlated dense covariance & 100 \\
    irt 2pl & Item response theory with additive non-identifiability & 144 \\
    stochastic volatility & High dimensional, correlated, and varying curvature & 503 \\
\bottomrule \\
\end{tabular}
\caption{\textbf{Summary of posterior densities.}}
\label{tab:posteriors}
\end{table*}

\begin{figure*}
    \centering
    \includegraphics[width=0.6\linewidth, height=5.5cm]{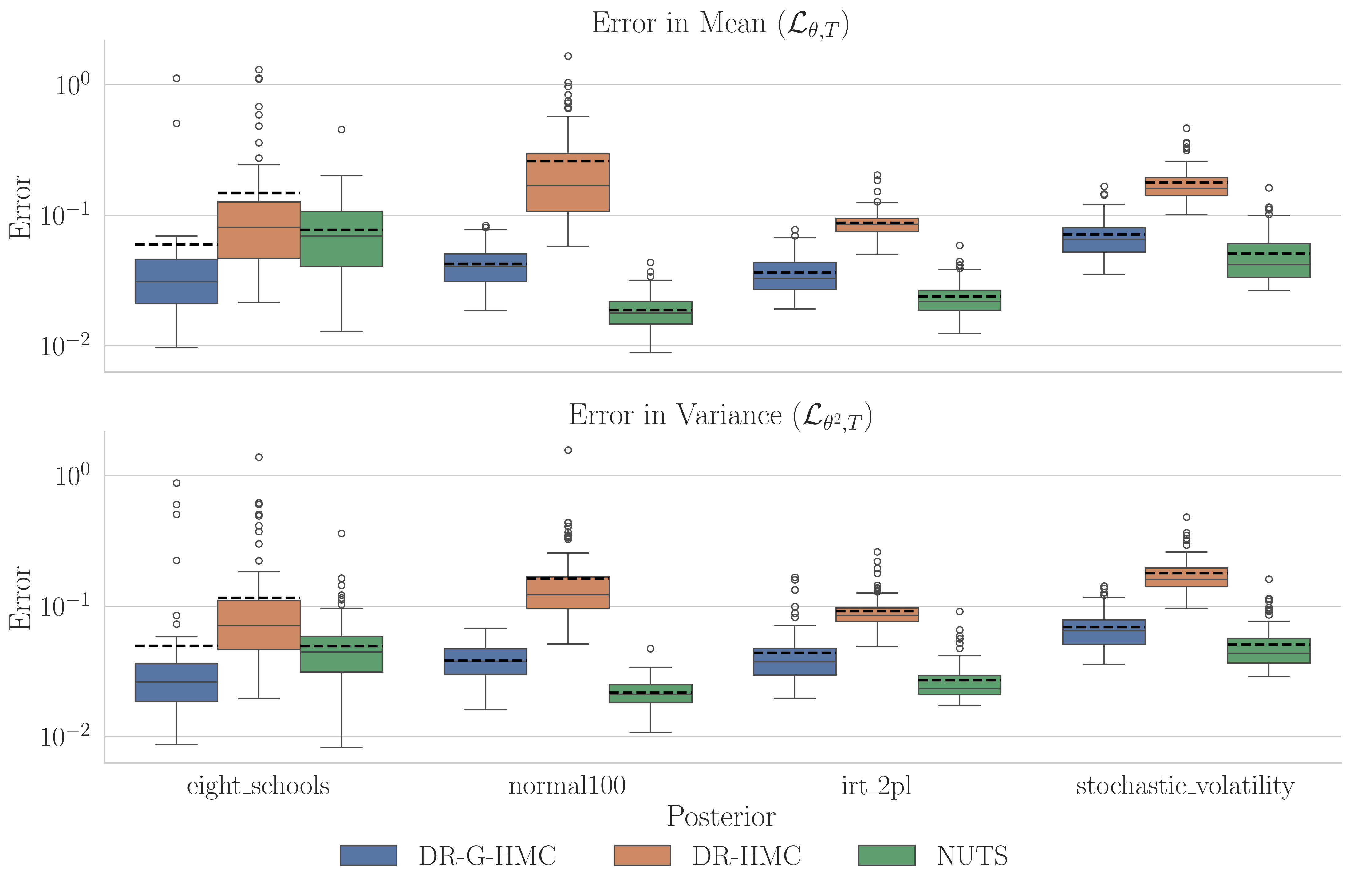}
    \caption{\textbf{DR-G-HMC overcomes the inefficiencies of generalized HMC.} Error in mean ($\mathcal{L}_{\theta,T}$) and variance ($\mathcal{L}_{\theta^2,T}$) is shown on the log scale for $100$ chains. Visual elements represent the following: dashed black line is the mean, solid gray line is the median, colored box is the $(25, 75)$th percentile, whiskers are $1.5$ times the inter-quartile range, and bubbles are outliers.}
    \label{fig:all_errors}
\end{figure*}

In \autoref{fig:funnel_error_vs_grad} we examine the error across gradient evaluations for both the mean ($\mathcal{L}_{\theta, T}$) and variance ($\mathcal{L}_{\theta^2, T}$).  NUTS cannot sample the funnel: it gets the wrong answer as its error plateaus quickly and does \textit{not} decrease with hundreds of thousands of more iterations.  In contrast, delayed rejection methods reduce the error in the mean and variance \textit{without} plateauing, with DR-G-HMC performing best. DR-G-HMC benefits from using a small step size only where necessary and not along entire HMC trajectories. In \autoref{fig:funnel_hist} we perform further investigation by plotting the histogram of the (analytically available) log scale parameter $x \sim \textrm{normal}(0,3)$ from our samplers. This confirms our diagnosis: the delayed rejection methods can sample deep into mouth ($x \gg 0$) and neck ($x \ll 0$) of the funnel, while  NUTS is unable to sample past the $x=-5$ region with a fixed step size. DR-G-HMC and DR-HMC can robustly sample from Neal's funnel while  NUTS will fail due to its multiscale geometry. Similar results are shown in \autoref{app:funnel} for Neal's funnel in $50$, $100$, and $250$ dimensions.

\subsection{Resolving the Inefficiencies of G-HMC}\label{sec:ineffencies} We examine a variety of real-world posterior densities summarized in \autoref{tab:posteriors} and detailed in \autoref{app:posterior_exp}. We seek to demonstrate that in this setting DR-G-HMC alleviates the backtracking inefficiencies of vanilla G-HMC, making it competitive with NUTS.


We compare DR-G-HMC, DR-HMC, and NUTS samplers as they are of primary interest. However in \autoref{app:samplers} we benchmark a larger variety of samplers including those with and without delayed rejection, momentum refreshment, and multiple leapfrog steps.

 We compute the error in mean ($\mathcal{L}_{\theta, T}$) and variance ($\mathcal{L}_{\theta^2, T}$) in \autoref{fig:all_errors}. DR-G-HMC achieves errors comparable to NUTS across a variety of posteriors while DR-HMC performs notably worse. DR-G-HMC especially outperforms NUTS on the eight schools posterior. Crucially, this demonstrates that delayed rejection removes the inefficiency due to reversing course that plagues G-HMC. This is especially impressive because NUTS extensively \textit{auto-tunes} its parameters while DR-G-HMC does not. Furthermore, since DR-G-HMC outperforms DR-HMC on all posteriors by a significant margin, DR-G-HMC achieves substantial benefit from using a small step size only where necessary.

We emphasize that the main advantage of DR-G-HMC is that it can efficiently sample from multiscale densities (unlike NUTS) \textit{and} from non-multiscale densities (unlike DR-HMC).

\subsection{Robust to Tuning Parameters}\label{sec:robust}

\begin{figure*}
    \centering
    \includegraphics[width=0.7\linewidth]{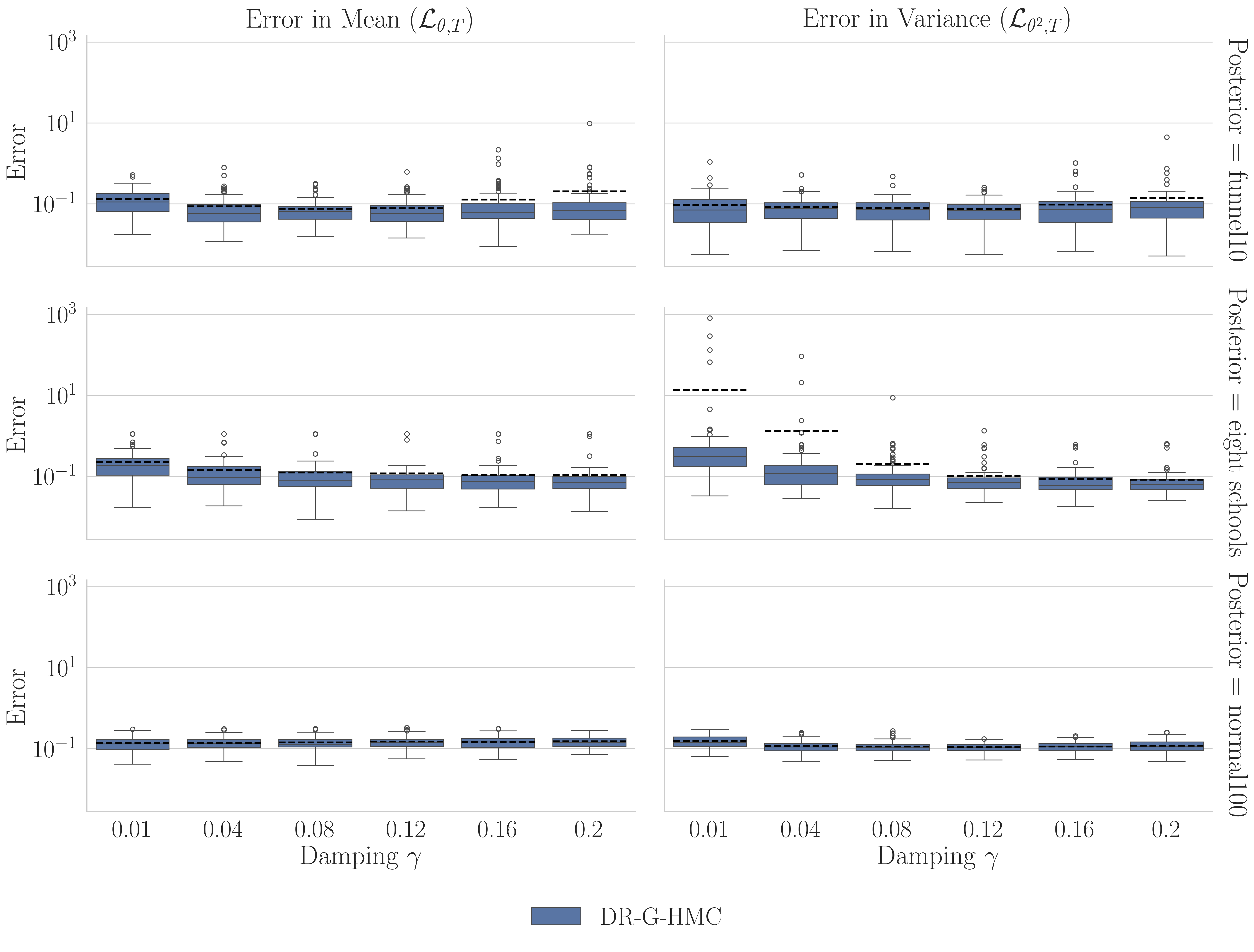}
    \caption{\textbf{DR-G-HMC is robust to the damping tuning parameter $\gamma$.} Error in mean ($\mathcal{L}_{\theta,T}$) and variance ($\mathcal{L}_{\theta^2,T}$) is shown for $100$ chains of various posterior densities. Visual elements represent the following: dashed black line is the mean, solid gray line is the median, colored box is the $(25, 75)$th percentile, whiskers are $1.5$ times the inter-quartile range, and bubbles are outliers.}
    \label{fig:robust_damping}
\end{figure*}

In this section we asses the sensitivity of DR-G-HMC to tuning parameters, especially the damping $\gamma$ that determines directed motion for G-HMC samplers. HMC and its variants are notoriously sensitive to its trajectory length $\tau = n \epsilon$ that determines directed motion for HMC samplers. Before  NUTS auto-tuned this parameter, HMC was incredibly difficult to use.

In \autoref{fig:robust_damping} we run DR-G-HMC on multiple posteriors for $T=10^5$ gradient evaluations with damping values $\gamma = [0.01, 0.04, 0.08, 0.12, 0.16, 0.2]$. We measure error in mean ($\mathcal{L}_{\theta, T}$) and variance ($\mathcal{L}_{\theta^2, T}$) and find that DR-G-HMC errors are remarkably stable across a wide range of damping values and posteriors. This indicates DR-G-HMC is insensitive to the damping value $\gamma$ and perhaps has an easier time making directed motion than HMC. Furthermore, in our box plot, we find the variance across DR-G-HMC chains is often larger than the variance between different damping values. This suggests chain initialization plays a larger role than the damping itself. \autoref{app:robust} demonstrates robustness to additional DR-G-HMC tuning parameters. \autoref{app:runtime} compares wallclock runtime of DR-G-HMC and DR-HMC.

\section{DISCUSSION}

We have not addressed the open problem of automatically tuning the DR-G-HMC parameters. The damping factor $\gamma$ is especially important because it determines how much momentum is conserved, analogous to trajectory length in HMC. In prior work, the MEADS algorithm (\cite{MEADs}) auto-tuned the parameters for the non-reversible G-HMC slice sampling scheme of \textcite{neal2003slice}, including the damping factor $\gamma$. If similar work can be done, DR-G-HMC may see a performance increase similar to  NUTS over HMC.

\section{CONCLUSION}

We introduced the DR-G-HMC sampler which applies delayed rejection to G-HMC to adapt to local curvature with dynamic step size selection. By combining the useful features of DR-HMC and G-HMC, our sampler \textit{simultaneously} solves the directed motion problem of G-HMC and provides multiscale density sampling for hierarchical models. DR-G-HMC ultimately outperforms both its individual components, while being competitive with  NUTS and robust to tuning parameters.

\subsubsection*{References}

\printbibliography[heading=none]

\section*{Checklist}

 \begin{enumerate}

 \item For all models and algorithms presented, check if you include:
 \begin{enumerate}
   \item A clear description of the mathematical setting, assumptions, algorithm, and/or model. [Yes] The DR-G-HMC algorithm is detailed in \autoref{alg:DR-G-HMC}.
   \item An analysis of the properties and complexity (time, space, sample size) of any algorithm. [Yes] In DR-G-HMC, the number of model + gradient evaluations for the $k$th proposal attempt is $\mathcal{O}(2^k)$ detailed in \autoref{sec:drghmc}, along with implications of caching. 
   \item (Optional) Anonymized source code, with specification of all dependencies, including external libraries. [Yes] Annonymized source is submitted.
 \end{enumerate}

 \item For any theoretical claim, check if you include:
 \begin{enumerate}
   \item Statements of the full set of assumptions of all theoretical results. [Yes] Limited assumptions are used in the paper, but they can be found in \autoref{app:drghmc}.
   \item Complete proofs of all theoretical results. [Yes] \autoref{app:accept_prob} for derivation of acceptance probability $\alpha$ and \autoref{app:invariant} for proof of detailed balance.
   \item Clear explanations of any assumptions. [Yes] See above.
 \end{enumerate}

 \item For all figures and tables that present empirical results, check if you include:
 \begin{enumerate}
   \item The code, data, and instructions needed to reproduce the main experimental results (either in the supplemental material or as a URL). [Yes] Included in the annonymized source code.
   \item All the training details (e.g., data splits, hyperparameters, how they were chosen). [Yes] Detailed in \autoref{sec:setup}.
     \item A clear definition of the specific measure or statistics and error bars (e.g., with respect to the random seed after running experiments multiple times). [Yes] Detailed in \autoref{sec:setup}.
     \item A description of the computing infrastructure used. (e.g., type of GPUs, internal cluster, or cloud provider). [Yes] Detailed in \autoref{sec:setup}.
 \end{enumerate}

 \item If you are using existing assets (e.g., code, data, models) or curating/releasing new assets, check if you include:
 \begin{enumerate}
   \item Citations of the creator If your work uses existing assets. [Yes] For major software used, I cite them, primarily \textcite{carpenter2017stan, Roualdes2023}.
   \item The license information of the assets, if applicable. [Yes] We checked relevant licenses for software used.
   \item New assets either in the supplemental material or as a URL, if applicable. [Yes] We include annonymized source code upon submission.
   \item Information about consent from data providers/curators. [Not Applicable] Not using any public datasests.
   \item Discussion of sensible content if applicable, e.g., personally identifiable information or offensive content. [Not Applicable] No sensitive content used.
 \end{enumerate}

 \item If you used crowdsourcing or conducted research with human subjects, check if you include:
 \begin{enumerate}
   \item The full text of instructions given to participants and screenshots. [Not Applicable] Research does not involve humans.
   \item Descriptions of potential participant risks, with links to Institutional Review Board (IRB) approvals if applicable. [Not Applicable] Research does not involve humans.
   \item The estimated hourly wage paid to participants and the total amount spent on participant compensation. [Not Applicable] Research does not involve humans.
 \end{enumerate}

 \end{enumerate}

\clearpage
\onecolumn
\appendix

\section{DR-G-HMC TECHNICAL DETAILS} \label{app:drghmc}

First we review technical background on Markov chain Monte Carlo, Metropolis-Hastings, and HMC in \autoref{sec:MH}. Then we prove that DR-G-HMC maintains an invariance over the target density in \autoref{app:invariant}. Finally we derive the acceptance probability of DR-G-HMC that maintains detailed balance in \autoref{app:accept_prob}. We use notation that differs from the remainder of the paper in this appendix.

\subsection{Background}\label{sec:MH}

\subsubsection{Markov Chain Monte Carlo} Markov chain Monte Carlo (MCMC) methods seek to generate samples $x \in S$ from a probability density function (pdf) $\pi$ that is absolutely continuous (AC) over some state space $S$. MCMC methods do so by constructing a Markov chain with $\pi$ as its stationary distribution using a transition kernel $k(x,y)$ that gives the pdf of transitioning from the current state $x$ to the next state $y$.

\begin{definition}[Transition kernel normalization]
    The transition kernel $k$ is normalized as
    \begin{equation}\label{eq:normalization}
        \int k(x,y) dy = 1 \qquad \forall x \in S.
    \end{equation}
\end{definition}

\begin{definition}[Invariance]
    A Markov chain with stationary distribution $\pi$ must be invariant to the transition kernel $k$ as
    \begin{equation}\label{eq:invariance}
        \int \pi(x) k(x,y) dx = \pi(y) \qquad \forall y \in S.
    \end{equation}
\end{definition}

\begin{definition}[Detailed balance]\label{def:detailed_balance} An AC transition kernel $k$ satisfies detailed balance if
    \begin{equation}\label{eq:detailed_balance}
        \pi(x) k(x,y) = \pi(y) k(y,x).
    \end{equation}
    A non-AC transition kernel $k$ maintains detailed balance if
    \begin{equation}\label{eq:detailed_balance_weak}
        \int_A \int_B \pi(x) k(x, y) dx dy = \int_A \int_B \pi(y) k(y,x) dx dy \qquad \text{for all subsets } A, B \subset S
    \end{equation}
    since the two sides of \autoref{eq:detailed_balance} may not always be defined.
\end{definition}

\begin{proposition}
    If an AC transition kernel maintains detailed balance, it satisfies $\pi$-invariance by substituting \autoref{eq:detailed_balance} into \autoref{eq:invariance} and applying \autoref{eq:normalization}. (Non-AC transition kernels also maintain $\pi$-invariance but the more complex proof is omitted here.)
\end{proposition}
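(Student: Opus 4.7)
The plan is essentially a two-line chain of substitutions, since the proposition statement itself spells out the strategy. I would start from the left-hand side of the invariance identity in \autoref{eq:invariance}, namely $\int \pi(x) k(x,y)\, dx$, and immediately rewrite the integrand using detailed balance from \autoref{eq:detailed_balance} as $\pi(y) k(y,x)$. This pointwise substitution is legitimate precisely because the kernel is AC, so both sides of detailed balance are genuine pdfs rather than measures that might disagree on null sets.

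Next, I would observe that $\pi(y)$ is constant with respect to the variable $x$ of integration, so it factors out of the integral, leaving $\pi(y) \int k(y,x)\, dx$. Applying the normalization condition \autoref{eq:normalization} with $y$ in the role of the ``current'' state collapses the remaining integral to $1$, yielding $\pi(y)$. This matches the right-hand side of \autoref{eq:invariance} for every $y \in S$, which is exactly the $\pi$-invariance claim.

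There is essentially no obstacle in the AC case; the whole argument is one substitution followed by a trivial factoring and the normalization axiom. The only place where care would be needed is justifying why the AC hypothesis is doing real work: without it, $k(x, \cdot)$ is not guaranteed to be a well-defined density, and the pointwise substitution of detailed balance into the integrand fails, forcing one to work instead with the set-integrated form in \autoref{eq:detailed_balance_weak}. Since the author explicitly omits the non-AC proof, my proposal restricts to the AC case exactly as stated, and the main thing worth noting at the end is this scope boundary.
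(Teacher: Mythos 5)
Your argument is exactly the one the paper intends: substitute detailed balance into the integrand, pull $\pi(y)$ out, and apply normalization of $k(y,\cdot)$ to collapse the integral to $1$. The paper gives no more detail than this one-line chain itself, so your proposal matches its proof and your remark on why the AC hypothesis matters is a correct (and welcome) clarification.
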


\subsubsection{Metropolis-Hastings} The Metropolis-Hastings algorithm (\cite{metropolis1953equation}; \cite{hastings1970monte}) defines an AC proposal kernel $q(x,y)$ that gives the pdf of proposing a new state $y$ from the current state $x$. With some probability $\alpha(x,y)$ the proposal made with $q(x,y)$ becomes the next state of the Markov chain. Otherwise, the Markov chain remains at the current state $x$. The Metropolis-Hastings transition kernel is
\begin{equation*}
    k(x,y) = q(x,y) \alpha(x,y) + \delta_x(y) r(x)
\end{equation*}
in which one can transition to state $y$ by (1) accepting a proposal to move from $x$ to $y$ (2) by rejecting a proposal to leave $y$. To show Metropolis-Hastings is $\pi$-invariant, we must pick the acceptance probability $\alpha$ that maintains detailed balance (\autoref{def:detailed_balance}) as
\begin{equation*}
    \pi(x) \left( q(x,y) \alpha(x,y) + \delta_x(y) r(x) \right) = \pi(y) \left( q(y, x) \alpha(y, x) + \delta_y(x) r(y) \right)
\end{equation*}
for the AC transition kernel $q$. Because the second term of the transition kernel, $\delta_x(y)r(x)$, is $x \leftrightarrow y$ symmetric, we need only consider the first term and simplify to
\begin{equation*}
    \pi(x) q(x,y) \alpha(x,y) = \pi(y) q(y, x) \alpha(y, x).
\end{equation*}
The most efficient acceptance probability is then $\alpha(x,y) = \min \left(1, \frac{\pi(y) q(y, x)}{\pi(x) q(x, y)} \right)$ because for every $x,y \in S$, either $\alpha(x,y)$ or $\alpha(y,x)$ is equal to one.

\subsection{DR-G-HMC is \texorpdfstring{$\pi$}{pi}-Invariant}\label{app:invariant}

In this section we prove that DR-G-HMC updates leave the target density $\pi$-invariant over the state space $S=\mathbb{R}^{2D}$.

\begin{definition}[Volume-Preserving] A map $F: \mathbb{R}^{2D} \rightarrow \mathbb{R}^{2D}$ is volume preserving if 
\begin{equation*}
    \int_B dx = \int_{F(B)} dx \qquad \text{for all subsets }B \subset \mathbb{R}^{2D}
\end{equation*}
where $F(B) := \{ F(x) : x \in B \}$ denotes the image of the set $B$.
\end{definition}

\begin{definition}[Involution]
    A map $F: \mathbb{R}^{2D} \rightarrow \mathbb{R}^{2D}$ is an involution if $F^{-1} = F$ as maps or equivalently $F^2 = I$ for identity map $I$.
\end{definition}

\begin{definition}[Shear]
    Any map on $\mathbb{R}^{2D}$ of the form $(\theta, \rho) \rightarrow (\theta + G(\rho), \rho)$ or $(\theta, \rho) \rightarrow (\theta, \rho + G(\theta))$ for differentiable map $G: \mathbb{R}^D \rightarrow \mathbb{R}^D$ is called a shear.
\end{definition}

\begin{proposition}\label{prop}
    Any shear is volume-preserving.
\end{proposition}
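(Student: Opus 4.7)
The plan is to prove volume preservation via the change of variables formula: a differentiable map $F: \mathbb{R}^{2D} \to \mathbb{R}^{2D}$ is volume-preserving whenever $|\det(DF(\theta,\rho))| = 1$ at every point, because then for any measurable $B$ we have $\int_{F(B)} dx = \int_B |\det DF|\, dx = \int_B dx$. So the entire proof reduces to computing one Jacobian determinant.

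First I would fix the shear $F(\theta, \rho) = (\theta + G(\rho), \rho)$. Writing its differential in block form with respect to the splitting $\mathbb{R}^{2D} = \mathbb{R}^D_\theta \oplus \mathbb{R}^D_\rho$, I get
\begin{equation*}
    DF(\theta, \rho) = \begin{pmatrix} I_D & DG(\rho) \\ 0 & I_D \end{pmatrix},
\end{equation*}
where $DG(\rho)$ is the $D \times D$ Jacobian of $G$ at $\rho$. This matrix is block upper triangular with identity blocks on the diagonal, so its determinant is the product of the determinants of the diagonal blocks, namely $\det(I_D) \cdot \det(I_D) = 1$.

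Next I would handle the second form $F(\theta, \rho) = (\theta, \rho + G(\theta))$ by an identical computation: the Jacobian is now block lower triangular with identity diagonal blocks, and the determinant is again $1$. Invoking the change of variables formula, $\int_{F(B)} dx = \int_B |\det DF|\, dx = \int_B dx$, yields the claim. No step here is a real obstacle; the only thing to be careful about is formally citing the change of variables theorem, which requires $F$ to be a diffeomorphism onto its image, but a shear is easily seen to be a bijection of $\mathbb{R}^{2D}$ with smooth inverse given by reversing the sign of $G$ (so $F$ is in fact a global diffeomorphism), which removes that concern.
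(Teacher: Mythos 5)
Your proof is correct and follows essentially the same route as the paper: write the Jacobian of the shear in block-triangular form with identity diagonal blocks, conclude $\det(DF)=1$, and invoke the change of variables formula. The extra remark that a shear is a global diffeomorphism (with inverse given by negating $G$) is a small added point of rigor but does not change the argument.
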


\begin{proof}
    The Jacobian of some shear $F$ with $D \times D$ blocks is $DF = \begin{bmatrix} I_D & J_G \\ 0 & I_D \end{bmatrix}$ or $DF = \begin{bmatrix} I_D & 0 \\ J_G & I_D \end{bmatrix}$. In both cases $\text{det}(DF) = 1$ implying $F$ is volume-preserving.
\end{proof}

\begin{lemma}\label{lemma:F_vol_pre_inv}
    Consider the momentum flip map $P$ and $n \in \mathbb{N}$ compositions of the leapfrog map $L_\epsilon$ with $\epsilon > 0$ defined in \autoref{alg:DR-G-HMC}. Then the map $F: \mathbb{R}^{2D} \rightarrow \mathbb{R}^{2D}$ defined as $F=P L_\epsilon^n$ is a volume-preserving involution.
\end{lemma}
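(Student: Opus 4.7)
The plan is to prove the two properties separately, volume preservation first, then involutivity.

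For volume preservation, I would decompose one leapfrog step $L_\epsilon$ into its three substeps exactly as written in \textproc{leapfrog} in \autoref{alg:DR-G-HMC}: the half-kick $(\theta,\rho)\mapsto(\theta,\rho+\tfrac{\epsilon}{2}\nabla\log\pi(\theta))$, the drift $(\theta,\rho)\mapsto(\theta+\epsilon M^{-1}\rho,\rho)$, and the second half-kick, which has the same form as the first. Each of these is a shear (with $G(\theta)=\tfrac{\epsilon}{2}\nabla\log\pi(\theta)$ or $G(\rho)=\epsilon M^{-1}\rho$), so by \autoref{prop} each is volume-preserving; hence so is $L_\epsilon$, and by induction so is $L_\epsilon^n$. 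The flip $P$ has Jacobian $\mathrm{diag}(I_D,-I_D)$ with determinant $(-1)^D$, so $|\det DP|=1$ and $P$ is volume-preserving as well. Composition of volume-preserving maps is volume-preserving, so $F=PL_\epsilon^n$ is volume-preserving.

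For involutivity I would prove the key reversibility identity $PL_\epsilon P = L_\epsilon^{-1}$ for a single leapfrog step. This is a short direct calculation: starting from $(\theta,\rho)$, apply $P$ to get $(\theta,-\rho)$, then run leapfrog, then apply $P$ again, and check that the result is exactly the state produced by running one leapfrog step backwards (i.e.\ with $\epsilon\to-\epsilon$), which is $L_\epsilon^{-1}(\theta,\rho)$. The essential structural reason is that $L_\epsilon$ is a palindrome in its substeps (half-kick, drift, half-kick), so flipping $\rho$, running the substeps in order, and flipping $\rho$ again reverses the signs of all updates and runs them in the reverse order, yielding the inverse map.

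Once $PL_\epsilon P = L_\epsilon^{-1}$ is established, iterating gives $PL_\epsilon^n P = (PL_\epsilon P)^n = L_\epsilon^{-n}$, equivalently $PL_\epsilon^n = L_\epsilon^{-n}P$. Then
\begin{equation*}
F^2 \;=\; P L_\epsilon^n\, P L_\epsilon^n \;=\; L_\epsilon^{-n}\, P\, P\, L_\epsilon^n \;=\; L_\epsilon^{-n} L_\epsilon^n \;=\; I,
\end{equation*}
using $P^2=I$, which proves $F$ is an involution.

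The routine part is the shear decomposition and the bookkeeping with $P$; the only slightly delicate step is the reversibility identity $PL_\epsilon P = L_\epsilon^{-1}$, which I expect to be the main obstacle in the sense that it is where the specific symmetric structure of the Störmer–Verlet/leapfrog integrator is actually used. I would either verify it by writing out the three substeps explicitly and matching coordinates, or, more concisely, observe that each of the three substeps $S$ individually satisfies $PSP = S^{-1}$ (the half-kick and the drift both negate their update under conjugation by $P$), and then use the palindromic ordering of the substeps in $L_\epsilon$ to conclude $PL_\epsilon P = L_\epsilon^{-1}$.
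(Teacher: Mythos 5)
Your proposal is correct and follows essentially the same route as the paper: volume preservation via the decomposition of $L_\epsilon$ into three shears plus the trivially volume-preserving flip $P$, and involutivity via time-reversibility of the leapfrog step under momentum negation (your conjugation identity $PL_\epsilon P = L_\epsilon^{-1}$ is exactly the paper's statement that $L_\epsilon(\theta^{(t+1)},-\rho^{(t+1)})=(\theta^{(t)},-\rho^{(t)})$). If anything, your substep-by-substep verification that $PSP=S^{-1}$ combined with the palindromic ordering is more explicit than the paper's ``by inspection,'' and correctly identifies where the symmetric Störmer--Verlet structure is actually used.
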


\begin{proof}
    $P$ is trivially volume-preserving and $L_\epsilon$ is the composition of three shears, each of which is volume-preserving by \autoref{prop}. Thus the composition $F = P L_\epsilon^n$ is volume preserving. $L_\epsilon$ is time-reversible such that if $L_\epsilon(\theta^{(t)}, \rho^{(t)}) = (\theta^{(t+1)}, \rho^{(t+1)})$, then $L_\epsilon(\theta^{(t+1)}, -\rho^{(t+1)})=(\theta^{(t)}, -\rho^{(t)})$ by inspection of the three steps of the leapfrog map. Since this holds for any $n$, we know $P L_\epsilon^n = (P L_\epsilon^n)^{-1}$ and thus $F$ is an involution.
\end{proof}

\begin{lemma}\label{eq:lemma}
    The Metropolis-Hastings algorithm with a deterministic proposal kernel $q_F(x,y) = \delta(y-F(x))$ and acceptance probability $\alpha$ obeying
    \begin{equation}\label{eq:determ_prop_kernel}
        \pi(x) \alpha(x,y) = \pi(y) \alpha(y, x) \qquad \forall x,y \in \mathbb{R}^{2D}
    \end{equation}
    maintains an invariance over a density $\pi$ if the map $F$ is a volume-preserving involution.
\end{lemma}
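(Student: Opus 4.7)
My plan is to establish the weak form of detailed balance from Equation \ref{eq:detailed_balance_weak}, since the proposal kernel $q_F$ contains a Dirac delta and is therefore not absolutely continuous; once weak detailed balance holds, $\pi$-invariance follows from the remark immediately after \autoref{def:detailed_balance}. The full Metropolis--Hastings transition kernel takes the form
\[
k(x,y) \;=\; \alpha(x,y)\,\delta(y - F(x)) \;+\; \bigl(1 - \alpha(x,F(x))\bigr)\,\delta_x(y),
\]
so for any measurable $A,B \subset \mathbb{R}^{2D}$ I would split the integral $\int_A \int_B \pi(x)\,k(x,y)\,dx\,dy$ into an accept piece and a reject piece. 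The reject piece collapses to $\int_{A \cap B} \pi(x)\bigl(1 - \alpha(x,F(x))\bigr)\,dx$, which is manifestly symmetric under $x \leftrightarrow y$ relabeling and so appears identically on both sides of the balance condition; this reduces the problem to the accept piece.

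For the accept piece I would collapse the delta against $y$, obtaining $\int_B \pi(x)\,\alpha(x,F(x))\,\mathbb{1}_A(F(x))\,dx$ on the LHS and, by the symmetric computation, $\int_A \pi(y)\,\alpha(y,F(y))\,\mathbb{1}_B(F(y))\,dy$ on the RHS. The key step is a change of variables $u = F(x)$ in the LHS: by \autoref{lemma:F_vol_pre_inv}, $F$ is volume-preserving so $du = dx$, and $F$ is an involution so $x = F(u)$, converting the condition $x \in B$ into $F(u) \in B$ and the indicator $\mathbb{1}_A(F(x))$ into $\mathbb{1}_A(u)$. This rewrites the LHS as $\int_A \pi(F(u))\,\alpha(F(u),u)\,\mathbb{1}_B(F(u))\,du$. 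Applying the hypothesis $\pi(x)\alpha(x,y) = \pi(y)\alpha(y,x)$ with $(x,y) = (u, F(u))$ replaces $\pi(F(u))\alpha(F(u),u)$ by $\pi(u)\alpha(u,F(u))$, which matches the RHS exactly after relabeling $u \to y$.

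The main obstacle is the change-of-variables step: one must carefully justify that the image $F(B)$ coincides with $\{u : F(u) \in B\}$, which is where involutivity is essential, and that the Jacobian contribution is exactly unity, which is where volume-preservation enters; these are the only two places the hypotheses on $F$ are used. The $\delta$-function manipulation itself is formally clean once one interprets $\delta(y - F(x))$ inside a double integral as simply enforcing $y = F(x)$ and collapsing one integration variable. No further properties of $\pi$ or $\alpha$ are needed beyond the symmetry relation \eqref{eq:determ_prop_kernel} given in the statement.
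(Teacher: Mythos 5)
Your proof is correct and follows essentially the same route as the paper: reduce to weak detailed balance, collapse the Dirac delta, change variables via $u=F(x)$ using volume preservation for the unit Jacobian and involutivity to identify $F(B)$ with $\{u : F(u)\in B\}$, then invoke the symmetry relation \eqref{eq:determ_prop_kernel}. Your treatment is if anything slightly more careful than the paper's, since you explicitly carry the rejection term $\bigl(1-\alpha(x,F(x))\bigr)\delta_x(y)$ and note its $x\leftrightarrow y$ symmetry rather than relying on the earlier Metropolis--Hastings discussion.
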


\begin{proof}
    For Metropolis-Hastings updates to remain $\pi$-invariant, we must maintain detailed balance in the weak sense of \autoref{eq:detailed_balance_weak} as
    \begin{equation*}
        \int_A \int_B \pi(x) \alpha(x, y) q_F(x,y) dx dy = \int_A \int_B \pi(y) \alpha(x, y) q_F(y, x) dx dy
    \end{equation*}
    for all subsets $A, B \in \mathbb{R}^D$. We show this is true by transforming the left hand side to the right hand side in a series of steps that require volume-preservation $F=F^{-1}$ and involution $\text{det} (DF)=1$:

    \begin{align*}
         \int_A \int_B &\pi(x) \alpha(x, y) q_F(x,y) dx dy \\
         &= \int_A \int_B \pi(y) \alpha(y, x) \delta(y - F(x)) dx dy  \qquad  \text{by substitution of \autoref{eq:determ_prop_kernel}} \\
         &= \int_{B \bigcap F^{-1}(A)} \pi(F(x)) \alpha(F(x), x) dx \qquad \text{by shifting property of the delta distribution} \\
         &= \int_{F(B) \bigcap A} \pi(y) \alpha(y, F^{-1}(y)) \cdot \mid \text{det}(DF(F^{-1}(y)) \mid ^{-1} dy  \qquad \text{by substituting } y=F(x) \\
         &= \int_{F^{-1}(B) \bigcup A} \pi(y) \alpha(y, F(y)) dy \qquad \text{by } F^{-1}=F \text{ and unity Jacobian factor} \\
         &= \int_A \int_B \pi(y) \alpha(y, x) \delta(x - F(y)) dx dy \qquad \text{by shifting property of the delta distribution}
    \end{align*}
\end{proof}

\begin{theorem}[DR-G-HMC is invariant over the Gibbs density $\tilde{\pi}$]\label{eq:theorem}
    Let $\pi$ be a continuous, differentiable pdf over $\mathbb{R}^{D}$ with an associated Gibbs density $\tilde{\pi}$ over $\mathbb{R}^{2D}$ in \autoref{eq:gibbs}. The Markov chain with DR-G-HMC updates, given by the composition of a Gibbs step and up to $K$ Metropolis-Hastings step, maintains the Gibbs density $\tilde{\pi}$ as an invariant density.
\end{theorem}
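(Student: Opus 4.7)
The plan is to decompose each DR-G-HMC iteration into two sub-updates and prove $\tilde\pi$-invariance of each separately, since compositions of $\tilde\pi$-invariant Markov kernels remain $\tilde\pi$-invariant. The two sub-updates are (i) the partial momentum refreshment $\rho' \sim \mathrm{normal}(\rho\sqrt{1-\gamma},\,\gamma M)$, and (ii) the delayed-rejection Metropolis-Hastings sweep of up to $K$ proposals together with the terminal unconditional momentum flip.

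For the Gibbs step I would simply verify the Gaussian conjugacy: if $\rho \sim \mathrm{normal}(0,M)$, then $\rho\sqrt{1-\gamma}\sim\mathrm{normal}(0,(1-\gamma)M)$, and adding an independent $\mathrm{normal}(0,\gamma M)$ shift recovers $\mathrm{normal}(0,M)$. Since $\theta$ is untouched and $\tilde\pi$ factorizes as $\pi(\theta)\cdot\mathrm{normal}(\rho\mid 0,M)$, this step leaves $\tilde\pi$ invariant.

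For the delayed-rejection sweep I would decompose the transition kernel as a sum over $K+1$ paths: for each $k \in \{1,\ldots,K\}$, a path that accepts the $k$-th proposal after rejecting the previous $k-1$, plus a terminal path that rejects all $K$ proposals and only applies the final momentum flip $P$. Following the Tierney and Green--Mira approach, $\tilde\pi$-invariance reduces to verifying detailed balance for each path separately. Each $F_k = L_{\epsilon_k}^{n_k}P$ is a volume-preserving involution by the same argument as \autoref{lemma:F_vol_pre_inv}, so its deterministic proposal kernel $q_{F_k}(x,y)=\delta(y-F_k(x))$ fits the hypotheses of \autoref{eq:lemma}. The terminal rejection path is handled separately: $P$ is itself a volume-preserving involution, and $\tilde\pi$ is $P$-invariant because the kinetic energy depends only on $\rho^\top M^{-1}\rho$.

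The hard part is verifying that the recursively defined acceptance probability in \autoref{eq:accept_prob} obeys the detailed-balance identity
\begin{equation*}
    \tilde\pi(x)\,\alpha_k\bigl(x,F_k(x)\bigr) \;=\; \tilde\pi\bigl(F_k(x)\bigr)\,\alpha_k\bigl(F_k(x),x\bigr)
\end{equation*}
required by \autoref{eq:lemma}. I would proceed by induction on $k$. The base case $k=1$ is the ordinary HMC Metropolis ratio and follows immediately from $F_1$ being a volume-preserving involution. For the inductive step, I would substitute the product form of $\alpha_k$ and use $F_i^2=I$ to identify the ghost state $F_i(F_k(x))$ in the numerator as the $i$-th hypothetical sub-proposal from $F_k(x)$; under the swap $x\leftrightarrow F_k(x)$, the ratios $(1-\alpha_i)/(1-\alpha_i)$ invert in exactly the pattern needed to cancel the inversion of the leading likelihood ratio. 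This ghost-state bookkeeping is the only substantive calculation and is where the specific form of \autoref{eq:accept_prob} becomes essential; once it is verified, detailed balance for every path holds and the theorem follows.
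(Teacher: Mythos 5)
Your overall architecture --- Gibbs-step invariance by Gaussian conjugacy, a path-by-path decomposition of the delayed-rejection kernel, volume-preserving involutions $F_k$ via \autoref{lemma:F_vol_pre_inv}, and detailed balance enforced by the recursive acceptance probability --- matches the paper's (the paper derives $\alpha_k$ \emph{from} the detailed-balance requirement rather than verifying it afterwards, but that is the same computation run in the other direction). However, the key identity you propose to verify is not the right one, and as stated it is false for $k\ge 2$.

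The $k$-th proposal is only reached from $x$ with probability $\prod_{i<k}\bigl(1-\alpha_i(x,F_i(x))\bigr)$, so the detailed-balance condition for the $k$-th path is the \emph{weighted} identity
\begin{equation*}
\tilde\pi(x)\prod_{i=1}^{k-1}\bigl[1-\alpha_i\bigl(x,F_i(x)\bigr)\bigr]\,\alpha_k\bigl(x,F_k(x)\bigr)
\;=\;
\tilde\pi(y)\prod_{i=1}^{k-1}\bigl[1-\alpha_i\bigl(y,F_i(y)\bigr)\bigr]\,\alpha_k\bigl(y,x\bigr),
\qquad y=F_k(x),
\end{equation*}
not $\tilde\pi(x)\,\alpha_k(x,y)=\tilde\pi(y)\,\alpha_k(y,x)$. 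With the $\alpha_k$ of \autoref{eq:accept_prob} the unweighted identity fails: writing $P_x=\prod_{i<k}[1-\alpha_i(x,F_i(x))]$ and $P_y$ analogously, one gets $\tilde\pi(x)\alpha_k(x,y)=\min\bigl(\tilde\pi(x),\,\tilde\pi(y)P_y/P_x\bigr)$ versus $\tilde\pi(y)\alpha_k(y,x)=\min\bigl(\tilde\pi(y),\,\tilde\pi(x)P_x/P_y\bigr)$, which differ whenever $P_x\neq P_y$ (e.g.\ $\tilde\pi(x)=\tilde\pi(y)$, $P_x=2P_y$). The weighted identity, by contrast, is immediate from $a\min(1,b/a)=\min(a,b)=b\min(1,a/b)$ with $a=\tilde\pi(x)P_x$, $b=\tilde\pi(y)P_y$, together with $F_k(F_k(x))=x$ to identify the ghost states $F_i(F_k(x))$ --- exactly the cancellation you describe informally, but it lands on the weighted equation, not the one you wrote. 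Correspondingly, \autoref{eq:lemma} cannot be invoked for path $k$ with $\alpha_k$ alone playing the role of the acceptance probability; it must be applied to the full path probability $P_x\,\alpha_k(x,\cdot)$, which is what the paper's appendix does by carrying the $[1-\alpha_i]$ factors through the transition kernel. Once the weights are restored, your induction is unnecessary: the verification for each $k$ is a direct one-line computation. A smaller bookkeeping point: the terminal unconditional flip is applied on \emph{every} path, not only the all-reject path; the clean treatment is to factor the update as a reversible DR-MH sweep (with the flip absorbed into each involutive $F_k$) composed with a final application of $P$, which preserves $\tilde\pi$ because the kinetic energy is even in $\rho$.
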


\begin{proof}
    The Gibbs step is $\tilde{\pi}$-invariant because it preserves the conditional over the momentum $\rho$ while leaving the position $\theta$ unaltered. The Metropolis-Hastings steps are $\tilde{\pi}$-invariant by \autoref{eq:lemma} because they use a deterministic proposal kernel with a volume-preserving involution $F = P L_\epsilon^n$.
\end{proof}

\begin{corollary}
    If DR-G-HMC is invariant over the Gibbs density $\tilde{\pi}$, then it is invariant over the target density $\pi$.
\end{corollary}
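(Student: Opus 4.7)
The plan is to derive the $\pi$-invariance of the $\theta$-marginal from the $\tilde{\pi}$-invariance of the joint chain on $\mathbb{R}^{2D}$, exploiting the product structure of the Gibbs density in equation \eqref{eq:gibbs}. By construction, $\tilde{\pi}(\theta,\rho) \propto \pi(\theta)\,\textrm{normal}(\rho \mid 0, M)$, so integrating out the momentum yields exactly the target: $\int \tilde{\pi}(\theta,\rho)\,d\rho = \pi(\theta)$. The sampler produces $\theta$-draws by extracting the first $D$ coordinates of its joint state, so it suffices to show that the pushforward of $\tilde{\pi}$ under the projection $(\theta,\rho)\mapsto \theta$ is preserved by the DR-G-HMC kernel acting on that marginal.

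First I would fix notation: let $\tilde{K}$ be the DR-G-HMC transition kernel on $\mathbb{R}^{2D}$ (the composition of the partial momentum refreshment and the up-to-$K$ Metropolis-Hastings stages analyzed in Theorem \ref{eq:theorem}), and let $K_\theta$ denote its projection onto the $\theta$-coordinate. By Theorem \ref{eq:theorem}, for every measurable $A \times B \subset \mathbb{R}^{2D}$,
\begin{equation*}
    \int \tilde{\pi}(\theta,\rho)\, \tilde{K}\bigl((\theta,\rho),\, A\times B\bigr)\, d\theta\, d\rho \;=\; \int_{A\times B} \tilde{\pi}(\theta',\rho')\, d\theta'\, d\rho'.
\end{equation*}
Taking $B = \mathbb{R}^D$ on both sides and using $\int \tilde{\pi}(\theta',\rho')\,d\rho' = \pi(\theta')$, the right-hand side collapses to $\int_A \pi(\theta')\,d\theta'$. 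For the left-hand side, I would integrate out $\rho$ first against the $\textrm{normal}(\rho\mid 0,M)$ factor of $\tilde{\pi}$, which is exactly the distribution from which the Gibbs partial-refreshment step keeps things stationary; this yields $\pi(\theta)\,K_\theta(\theta, A)$ after marginalization. Combining both sides gives $\int \pi(\theta)\,K_\theta(\theta,A)\,d\theta = \int_A \pi(\theta')\,d\theta'$, which is precisely $\pi$-invariance of the $\theta$-chain.

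There is no real obstacle here: the corollary is a routine marginalization argument that relies only on the fact that the target $\pi$ is the $\theta$-marginal of the Gibbs density $\tilde{\pi}$ on the augmented position-momentum space. The only care needed is to observe that the projection of a $\tilde{\pi}$-invariant Markov chain onto a coordinate does yield a chain whose marginal distribution is preserved, which follows immediately by integrating the invariance identity against the indicator of $A \times \mathbb{R}^D$. Hence the corollary follows directly from Theorem \ref{eq:theorem}.
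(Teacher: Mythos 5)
Your argument is correct and is exactly the justification the paper leaves implicit: the corollary is stated without proof, the intended reasoning being precisely that $\pi$ is the $\theta$-marginal of $\tilde{\pi}$ (cf.\ the remark in Section~\ref{sec:background} that samples of $\theta$ are recovered by extracting the first $D$ coordinates), so $\tilde{\pi}$-invariance of the joint chain gives $\pi$ as the stationary marginal of the position coordinates. The only point worth tightening in your write-up is the phrase ``$\pi$-invariance of the $\theta$-chain'': because the momentum is only partially refreshed ($\gamma<1$), the projected $\theta$-process is not itself Markov, so what your marginalization identity actually establishes is that the $\theta$-marginal of the stationary joint law is $\pi$ --- which is the correct reading of the corollary and all that is needed.
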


\subsection{DR-G-HMC Acceptance Probability}\label{app:accept_prob}

DR-G-HMC produces samples $(\theta, \rho)$ from the Gibbs density $\tilde{\pi}$ (\autoref{eq:gibbs}) by generating states from a $\tilde{\pi}$-invariant Markov chain. In the Metropolis-Hastings updates of DR-G-HMC, we make upto $K$ proposals and accept it with some probability that maintains detailed balance (\autoref{def:detailed_balance}). In this section we derive this acceptance probability, displayed in \autoref{eq:accept_prob}. 

We will build towards the acceptance probability of arbitrarily many proposal attempts by first deriving the acceptance probability for $K=1,2,3$. For each proposal attempt, we define the proposal kernel $q$, then the transition kernel $k$, and finally the acceptance probability $\alpha$.

To simplify notation, we represent states from the $\tilde{\pi}$-invariant Markov chain as $x=(\theta, \rho)$. Furthermore, although our transition kernel $k$ will be non-AC, we will write detailed balance statements as \autoref{eq:detailed_balance} with the understanding that they will be interpreted as the weak sense in \autoref{eq:detailed_balance_weak}.

\subsubsection{One Proposal Acceptance Probability}

Let the $\tilde{\pi}$-invariant Markov chain have current state $x$ and proposed state $y$. We define a non-AC, deterministic proposal kernel with the proposal map $F_1 = P L_{\epsilon_1}^{n_1}$ as
\begin{equation*}
    q_1(x,y) = \delta(y-F_1(x)).
\end{equation*}
The proposal kernel $q_1$ puts all its probability mass on $y=F_1(x)$ and zero everywhere else. This defines the transition kernel
\begin{equation*}
    k(x,y) = q_1(x,y) \alpha_1(x,y) + \delta_x(y) r_1(x)
\end{equation*}
with $x \leftrightarrow y$ symmetry in the rejection term $\delta_x(y)r_1(x)$ and the factor $q_1(x,y)$ (by $F$ in \autoref{lemma:F_vol_pre_inv}). With these symmetries, maintaining detailed balance reduces to satisfying
\begin{equation*}
    \tilde{\pi}(x) \alpha_1(x,y) = \tilde{\pi}(y) \alpha_1(y,x).
\end{equation*}
We choose the acceptance probability to maintain detailed balance with the maximum chance of acceptance as
\begin{equation*}
    \alpha_1(x,y) = \min \left( 1, \frac{ \tilde{\pi}(y)}{\tilde{\pi} (x)} \right).
\end{equation*}

\subsubsection{Two Proposals Acceptance Probability} Let the $\tilde{\pi}$-invariant Markov chain have current state $x$, rejected state $s_1$, and proposed state $y$. We define a non-AC, deterministic proposal kernel with the proposal map $F_2 = P L_{\epsilon_2}^{n_2}$ as
\begin{equation*}
    q_2(x, s_1, y) = \delta(y-F_2(x)).
\end{equation*}
This defines a transition kernel that must account for all the ways to get to state $y$: (1) accepting the first proposal $q_1(x,y)$ with probability $\alpha_1(x,y)$; (2) accepting the second proposal $q_2(x, s_1, y)$ with some new probability $\alpha_2(x, y)$; (3) rejecting the second proposal. Cases (2) and (3) must integrate over all possible rejected proposals $s_1$, giving us
\begin{equation*}
    k(x,y) = q_1(x,y) \alpha_1(x, y) + \int q_1(x, s_1) [1 - \alpha_1(x, s_1)] [ q_2(x, s_1, y) \alpha_2(x, y) + r_2(x) \delta_x(y)] ds_1
\end{equation*}
with probability of rejecting the second proposal as $r_2$.
Since the first term already satisfies detailed balance and $ r_2(x) \delta_x(y)$ is $x \leftrightarrow y$ symmetric, maintaining detailed balance reduces to plugging in deterministic proposal kernels $q_1, q_2$ and satisfying 
\begin{align*}
    \int \tilde{\pi}(x) q_1(x, s_1) [1 - \alpha_1(x,s)] q_2(x, s_1, y) \alpha_2(x, s_1, y) ds_1 = \\
    \int \tilde{\pi}(y) q_1(y, s_1^\prime) [1 - \alpha_1(y, s_1^\prime)] q_2(y, s_1^\prime, x) \alpha_2(y, s_1^\prime, x) ds_1^\prime
\end{align*}
for dummy variables $s_1, s_1^\prime$. To pick the acceptance probability $\alpha_2$ that satisfies detailed balance, we plug in deterministic proposal kernels $q_1, q_2$ as
\begin{align*}
    \int \tilde{\pi}(x) \delta(s_1 - F_1(x)) [ 1 - \alpha_1(x,s_1)] \delta(y - F_2(x)) \alpha_2(x, s_1, y) ds_1 = \\
    = \int \tilde{\pi}(y) \delta(s_1^\prime - F_1(y)) [1 - \alpha_1(y, s_1^\prime)] \delta(x - F_2(y)) \alpha_2(y, s_1^\prime, x) d s_1^\prime.
\end{align*}
Unlike the derivation in \textcite{tierney1999some}, we \textit{evaluate} this integral as
\begin{align*}
    \tilde{\pi}(x) [1 - \alpha_1(x, F_1(x))] \delta(y - F_2(x)) \alpha_2(x, F_1(x), y) = \\
    \tilde{\pi}(y) [1 - \alpha_1(y, F_1(y))]  \delta (x - F_2(y)) \alpha_2(y, F_1(y), x).
\end{align*}
Since $F_2 = F_2^{-1}$, the delta distributions are equal if $y=F_2(x)$, allowing us to simplify detailed balance to maintaining
\begin{align*}
    \tilde{\pi}(x) [1 - \alpha_1(x, F_1(x))] \alpha_2(x, F_1(x), y) = \tilde{\pi}(y) [1 - \alpha_1(y, F_1(y))]  \alpha_2(y, F_1(y), x).
\end{align*}
To maximize the acceptance rate under $y=F_2(x)$, we set
\begin{equation*}
    \alpha_2(x, F_1(x), y) = \min \left(1, \frac{\tilde{\pi}(y)}{\tilde{\pi}(x)} \frac{1 - \alpha_1(y, F_1(y))}{1 - \alpha(x, F_1(x))} \right).
\end{equation*}
With access to proposal maps $F_1, F_2$, we can rewrite the acceptance probability from current state $x$ to proposed state $y=F_2(x)$ as
\begin{equation*}
    \alpha_2(x, F_2(x)) = \min \left(1, \frac{\tilde{\pi}(F_2(x))}{\tilde{\pi}(x)} \frac{1 - \alpha_1(F_2(x), F_1(F_2(x)))}{1 \;\; - \;\; \alpha(x, F_1(x))} \right).
\end{equation*}
This formulation highlights the ghost states that emerge from $F_1(F_2(x))$.

\subsubsection{Three Proposal Acceptance Probability}
Let the $\tilde{\pi}$-invariant Markov chain have current state $x$, first rejected state $s_1$, second rejected state $s_2$, and proposed state $y$. We define a non-AC, deterministic proposal kernel with the proposal map $F_3 = P L_{\epsilon_3}^{n_3}$ as
\begin{equation*}
    q_3(x, s_1, s_2, y) = \delta(y-F_3(x))
\end{equation*}
This defines a transition kernel that must account for all four ways to get to state $y$: (1) accepting the first proposal $q_1$ with probability $\alpha_1$; (2) accepting the second proposal $q_2$ with probability $\alpha_2$; (3) accepting the third proposal $q_3$ with some new probability $\alpha_3$; (4) rejecting all proposals and remaining at the current state $y$. Case (2) must marginalize over all second proposals $s_1$ while cases (3) and (4) must marginalize over all second and third proposals $s_1, s_2$, giving us
\begin{align*}
    k(x,y) &= q_1(x,y) \alpha_1(x,y) \\
    &+ \int q_1(x, s_1) [1 - \alpha_1(x, s_1)] q_2(x, s_1, y) \alpha_2(x, s_1, y) ds \\
    &+ \int q_1(x, s_1) q_2(x, s_1, s_2) [ 1- \alpha_1(x, s_1)] [1 - \alpha_2(x, s_1, s_2)] \\
    & \qquad \times [q_3(x, s_1, s_2, y) \alpha_3(x, s_1, s_2, y) + r_3(x) \delta(y)] ds_1 ds_2
\end{align*}
with probability of rejecting the third proposal as $r_3$. Since the first and second terms already satisfied detailed balance and $r_3(x)\delta_x(y)$ is $x \leftrightarrow y$ symmetric, maintaining detailed balance reduces to plugging in deterministic proposal kernels $q_1, q_2, q_3$ and satisfying
\begin{align*}
    \int & \int \tilde{\pi}(x) \delta(s_1 - F_1(x)) [ 1- \alpha_1(x, F_1(x))] \delta(s_2 - F_2(x)) \\
    & \qquad \qquad \qquad \times [ 1 - \alpha_2(x, F_2(x))] \delta(y - F_3(x)) \alpha_3(x, F_3(x)) ds_1 ds_2 = \\
    & \int \int \tilde{\pi}(y) \delta(s_1^\prime - F_1(y)) [ 1 - \alpha_1(y, F_1(y)] \delta(s_2^\prime - F_2(y)) [1 - \alpha_2(y, F_2(y))] \\
    & \qquad \qquad \times \delta(x - F_3(y)) \alpha_3(y, F_3(y)) d s_1^\prime d s_2^\prime
\end{align*}
for dummy variables $s_1, s_2, s_1^\prime, s_2^\prime$. Following the same steps to derive $\alpha_2$ and requiring $y=F_3(x)$, the highest rate of acceptance is
\begin{equation*}
    \alpha_3(x, F_3(x)) = \min \left(1, \frac{\tilde{\pi}(F_3(x))}{\tilde{\pi}(x)} \frac{[1 - \alpha_1(F_3(x), F_1(F_3(x)))]}{1 - \alpha_1(x, F_1(x))} \frac{[1 - \alpha_1(F_3(x), F_2(F_3(x)))]}{1 - \alpha_1(x, F_2(x))} \right).
\end{equation*}
This again reveals the dependencies on ghost points $F_1(F_3(x)), F_2(F_3(x))$.

\subsubsection{\texorpdfstring{$k$}{k}th Proposal Acceptance Probability} Following the same pattern, the general acceptance probability of transitioning from state $x$ to the $k$th proposal $y$ after rejecting $k - 1$ previous proposals is
\begin{equation*}
    \alpha_k(x, F_k(x)) = \min \left(1, \frac{\tilde{\pi}(F_k(x))}{\tilde{\pi}(x)} \prod_{i=1}^{k-1} \frac{1 - \alpha_1(F_k(x), \; F_i(F_k(x)))}{1 \quad  - \quad \alpha_1(x, F_i(x))}  \right).
\end{equation*}
To recover the original acceptance probability in \autoref{eq:accept_prob}, we write out the current state $x$ as $(\theta, \rho^\prime)$, the proposed state $y$ as $(\theta^{\textrm{pr}}, \rho^{\textrm{pr}}) = F_k(\theta, \rho^\prime)$, and rewrite the Gibbs density $\tilde{\pi} \propto \pi(\theta) \text{normal}(\rho \mid 0, M)$ as
\begin{equation*}
    \alpha_k(\theta, \rho^\prime, F_k(\theta, \rho^\prime)) = \min \left(1, \frac{\pi( \theta^{\mathrm{pr}}) \textrm{normal}(\rho^{\mathrm{pr}} \mid 0, M)}{\pi(\theta) \textrm{normal}(\rho^\prime \mid 0, M)} \cdot \prod_{i=1}^{k-1}\frac{ 1 - \alpha_i \big( \theta^{\mathrm{pr}}, \rho^{\mathrm{pr}}, F_i(\theta^{\mathrm{pr}}, \rho^{\mathrm{pr}}) \big)}{ 1 - \alpha_i \big( \theta, \rho^\prime, F_i(\theta, \rho^\prime) \big)} \right).
\end{equation*}

\section{ADDITIONAL NEAL'S FUNNEL EXPERIMENTS}\label{app:funnel}
\subsection{Details of \autoref{fig:funnel_intro}}\label{app:funnel_fig_details}

Each $(x, y_i)$ coordinate of \autoref{fig:funnel_intro} represents a point in parameter space of the $10$ dimensional Neal's funnel density: every $(x,y_i)$ coordinate is transformed into a $10$ dimensional parameter vector $\theta = \{x, y\}$ by repeating $y=\{y_i, \ldots y_i\}$ nine times.

In \autoref{fig:funnel_multiscale} we compute the negative log density and the condition number of its Hessian at each coordinate $\theta$. In \autoref{fig:funnel_drghmc_stepsize}, we initialize DR-G-HMC from each coordinate $\theta$ and generate a single sample. This sample is generated by making sequential proposals attempts with decreasing step sizes. We record the mean step size that generates an \textit{accepted} proposal from $100$ such trials. To avoid proposal rejections, DR-G-HMC uses parameters $\epsilon=2, r=4, K=10$ for many proposal attempts $K$ that reach especially small step sizes $\epsilon_k$; see \autoref{alg:DR-G-HMC}.

\subsection{Neal's funnel in higher dimensions}

We sample from Neal's funnel in $50$, $100$, and $250$ dimensions with $100$ chains each run for $1,000,000$ gradient evaluations. In \autoref{fig:funnel_hist_high_dim} we plot the histogram of the log scale parameter $x$, with known density $x \sim \textrm{normal}(0, 3)$, aggregated across all chains. We observe similar behavior to the $10$ dimensional setting: delayed rejection methods sample deep into the neck of the funnel ($x \ll 0$) while NUTS cannot, despite enormous amounts of compute.

\begin{figure*}[!tbp]
  \centering
  \begin{subfigure}{0.8\textwidth}
    \includegraphics[width=\textwidth]{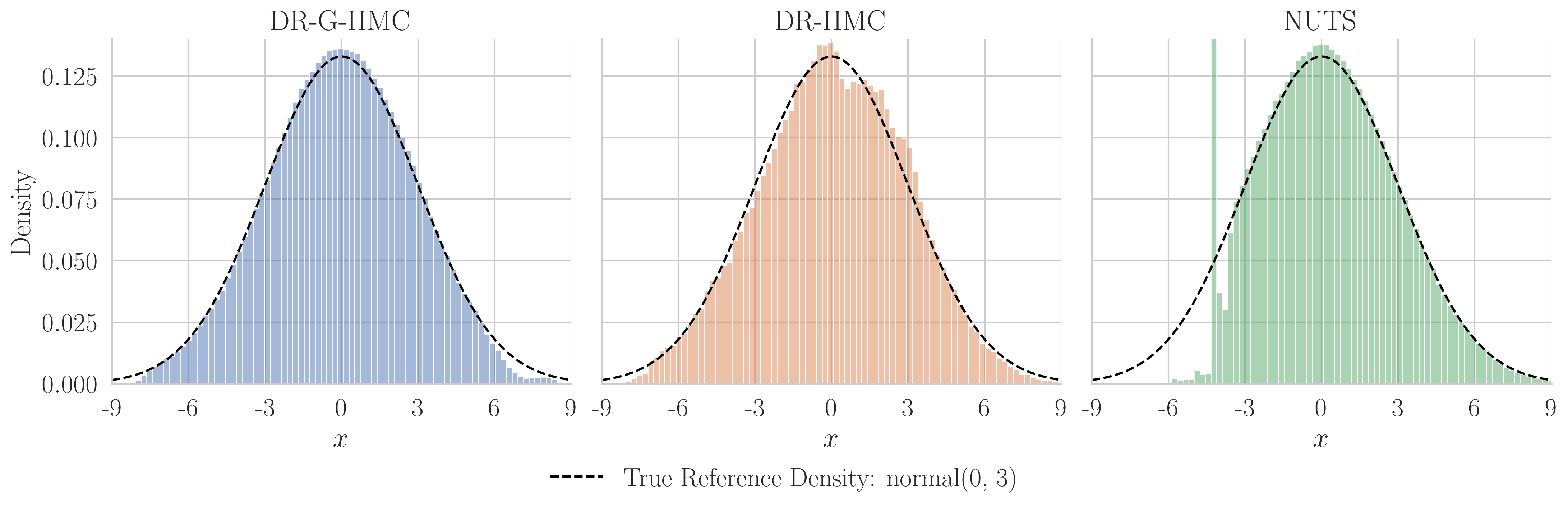}
    \caption{\textbf{$50D$ Neal's funnel.}}
  \end{subfigure}
  \begin{subfigure}{0.8\textwidth}
    \includegraphics[width=\textwidth]{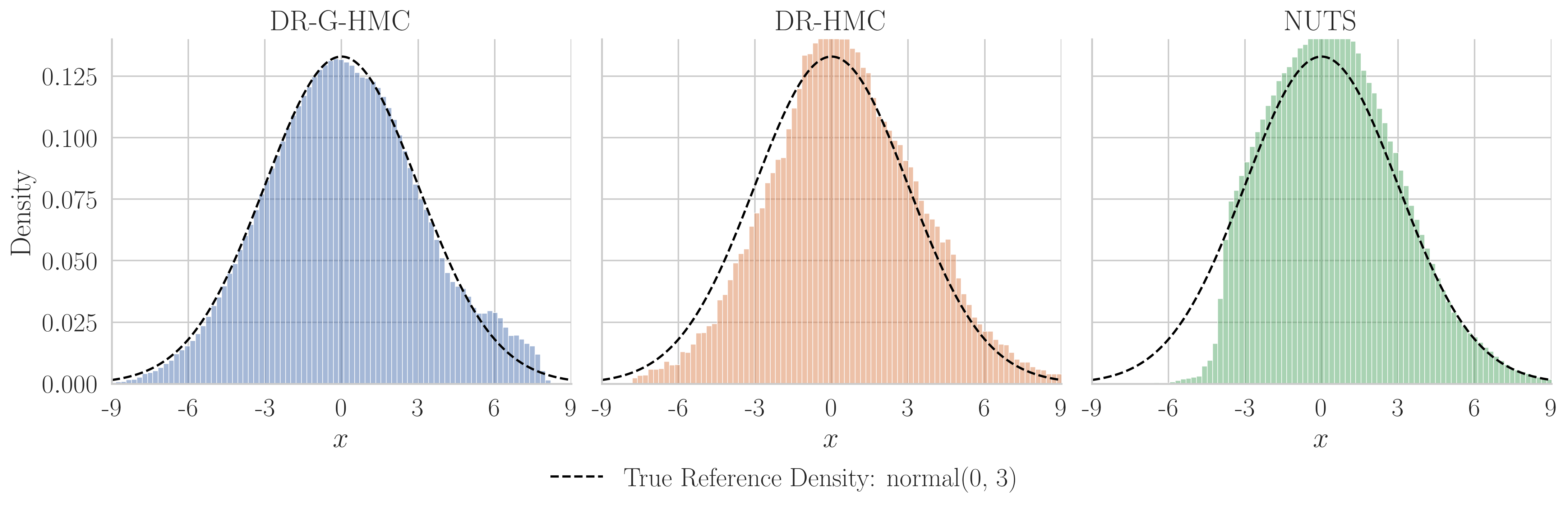}
    \caption{\textbf{$100D$ Neal's funnel.}}
  \end{subfigure}
  \begin{subfigure}{0.8\textwidth}
    \includegraphics[width=\textwidth]{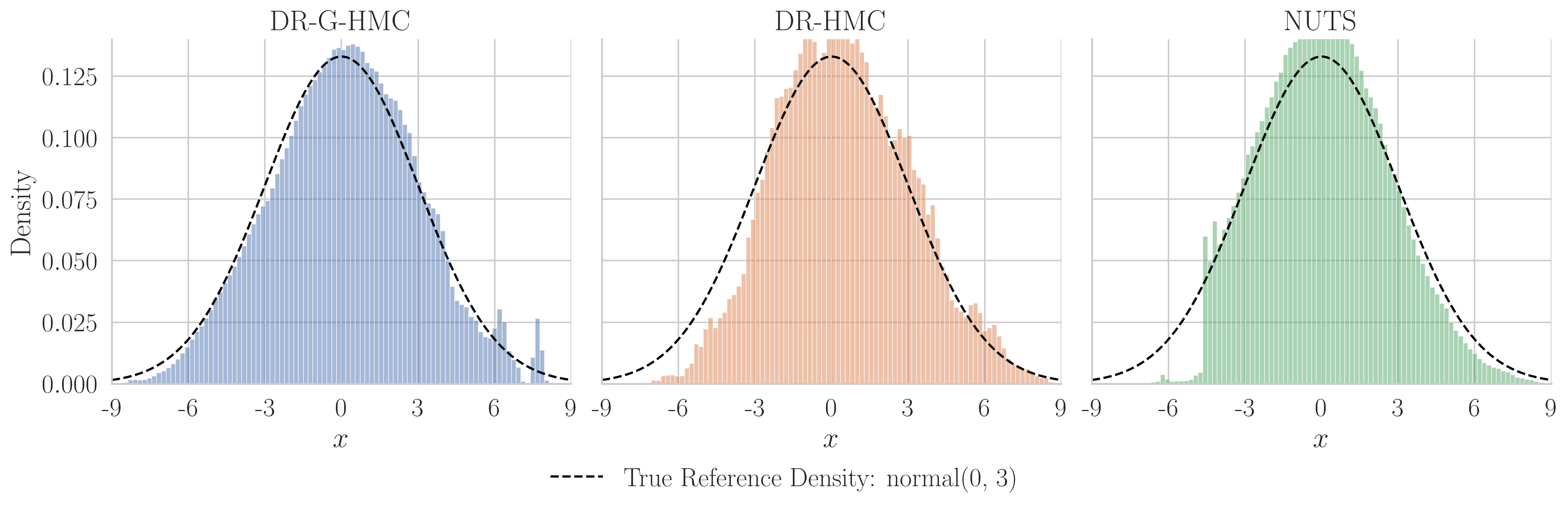}
    \caption{\textbf{$250D$ Neal's funnel.}}
  \end{subfigure}
  \caption{Even in higher dimensions, only delayed rejection methods like DR-G-HMC and DR-HMC can sample deep into the neck of the funnel ($x \ll 0$), while NUTS cannot.}
  \label{fig:funnel_hist_high_dim}
\end{figure*}

\section{TARGET DENSITY DETAILS} \label{app:posterior_exp}

\subsection{Eight Schools}

The eight schools data measures the impact of test-preparation interventions on student test scores across eight different schools \parencite{rubin1981estimation, gelman2021bayesian}. The data contains the average difference in pre-test and post-test scores $y_i$ and standard deviation $\sigma_i$ for the $i$th school. The model parameterizes each school with an efficacy $\theta_i$ drawn from a hierarchical normal prior with unknown location $\mu$ and scale $\tau$. The generative model consists of
\begin{align*}
    \mu \sim \text{normal}(0, 5) \qquad \tau \sim \text{cauchy}_+(0, 5) \\
    \theta_i \sim \text{normal}(\mu, \tau) \qquad y_i \sim \text{normal}(\theta_i, \sigma_i)
\end{align*}
with parameters $\theta = \{\mu, \tau, \theta_1 \ldots \theta_8 \}$. Eight schools is a classic problem in hierarchical Bayesian modeling . It is challenging because $\tau$ dictates the pooling of the treatment effects $\theta_i$ across schools, creating a multiscale, funnel-like geometry between them. 



\subsection{Normal100}

We consider a posterior over a 100-dimensional normal distribution with mean zero and a dense covariance matrix $\Sigma$ defined by $\Sigma_{ij} = \rho^{\mid i - j \mid}$ for some fixed $\rho \in (0, 1)$. This simulates a covariance matrix with high correlation between adjacent elements that slowly decreases between for distant dimensions $i$ and $j$. The generative model is
\begin{align*}
    \Sigma_{ij} &= \rho^{\mid i - j \mid} \quad i=1 \ldots N, j = 1 \ldots N \\
    y &\sim \text{normal}(0, \Sigma)
\end{align*}
for parameters $\theta = \{ \rho, y_1 \ldots y_N \}$. 



\subsection{Stochastic Volatility}

Stochastic volatility models seek to model the volatility, or variance, on the return of a financial asset as a latent stochastic process in discrete time  (\cite{kim1998stochastic}). We are given mean corrected returns $y_t$ at $T$ equally spaced time steps and seek to sample the latent log volatility $h_t$, the mean log volatility $\mu$, and persistence of the volatility term $\phi$. The generative model is
\begin{align*}
    \phi \sim \text{uniform}(-1, 1) \qquad &\sigma \sim \text{Cauchy}(0, 5) \qquad \mu \sim \text{Cauchy}(0, 10) \\
    h_1 \sim \text{normal}(\mu, \frac{\sigma^2}{1 - \phi^2}) \qquad &h_t \sim \text{normal}(\mu + \phi(h_{t-1} - \mu), \sigma^2) \quad t = 2, 3, \ldots T \\
   & y_t \sim  \text{normal}(0, e^{h_t}) \quad t = 2, 3, \ldots T
\end{align*}
with parameters $\theta = \{\mu, \sigma, \phi, h_1, \ldots, h_T \}$. The hierarchical prior on the volatility parameters induces strong correlation.



\subsection{Item Response Theory} Item response theory models how students answer questions on a test depending on student ability, question difficulty, and the discriminatinative power of the questions (\cite{gelman2006data}). For $I$ students and $J$ questions we are given the binary correctness $y_{ij}$ of student $i$'s answer on question $j$. The model uses the mean-centered ability of the student $i$ as $\alpha_i$, the difficulty of question $j$ as $\beta_j$, and discrimination of question $j$ as $\theta_j$. These terms are combined into something like a logistic regression (but with a multiplicative discrimination parameter). The generative model is
\begin{align*}
    \sigma_\theta &\sim \text{Cauchy}(0, 2) \qquad  \theta \sim \text{normal}(0, \sigma_\theta) \\
    \sigma_a &\sim \text{Cauchy}(0,2) \qquad a \sim \text{lognormal}(0, \sigma_a) \\
    \mu_b &\sim \text{normal}(0, 5) \qquad \sigma_b \sim \text{Cauchy}(0, 2) \qquad b \sim \text{normal}(\mu_b, \sigma_b) \\
    y_i & \sim \text{Bernoulli-Logit}(a_i * (\theta - b_i)) \quad i=1, \ldots, N
\end{align*}
for parameters $\theta = \{ \sigma_\theta, \theta, \sigma_a, a, \mu_b, \sigma_b, b\}$. 


\section{ADDITIONAL SAMPLER EXPERIMENTS} \label{app:samplers}

We provide an extensive comparison of samplers including HMC, G-HMC, DR-HMC, DR-G-HMC, NUTS, and DR-G-HMC with complete momentum refreshment (default DR-G-HMC has \textit{partial} momentum refreshment). Note that DR-G-HMC with complete momentum refresh is equivalent to DR-HMC with one leapfrog step per proposal.

We compute error in mean and variance for these samplers on a variety of multiscale and non-multiscale densities. Recall that multiscale densities are commonly induced by hiearchical probablistic models.

We examine multiscale densities (eight schools and banana) in \autoref{fig:multiscale_densities} and observe that all delayed rejection methods outperform G-HMC, HMC, and NUTS due to their adaptive step size selection.

We examine non-multiscale densities (irt-2pl and normal100) in \autoref{fig:non_multiscale_densities} and observe that DR-G-HMC, unlike DR-HMC, performs comparably to NUTS.

Across both sets of densities, we observe that \textit{partial} momentum refresh (blue) outperforms \textit{complete} momentum refresh (orange) in DR-G-HMC. This indeed suggests that continued motion, facilitated by partial momentum, is key to the performance of DR-G-HMC.

\begin{figure}[H]
    \centering
    \begin{minipage}{\textwidth}
        \centering
        \includegraphics[width=\linewidth]{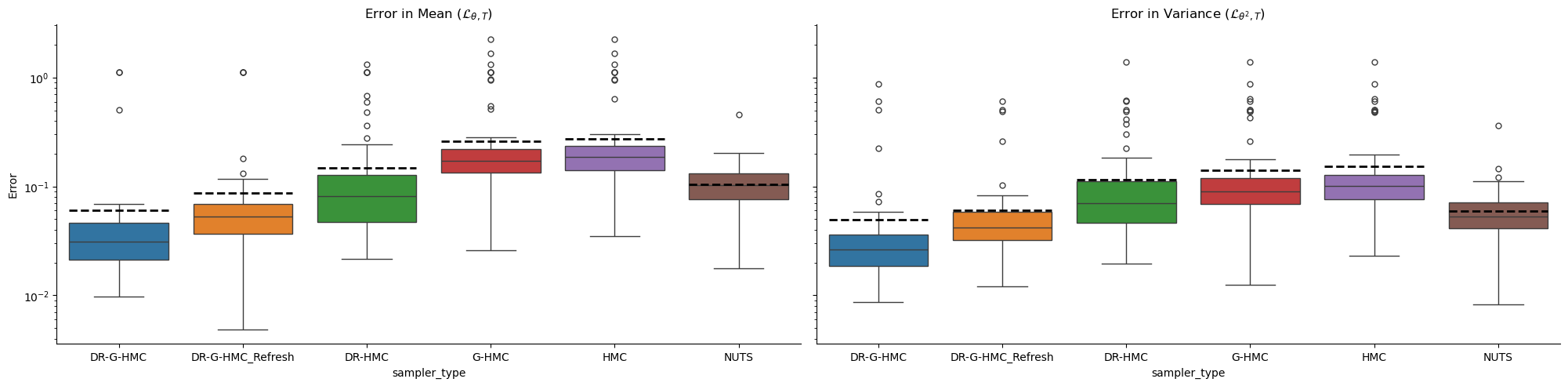}
        \caption*{(a) Eight Schools}
    \end{minipage}
    \vspace{10pt} 
    \begin{minipage}{\textwidth}
        \centering
        \includegraphics[width=\linewidth]{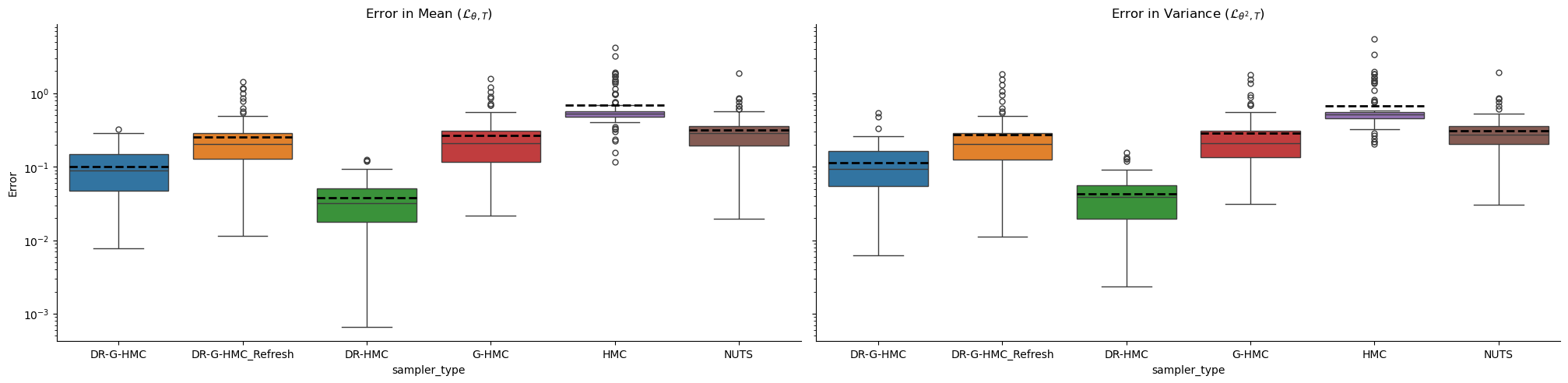}
        \caption*{(b) 2 Dim Banana}
    \end{minipage}
    \vspace{-10pt} 
    \caption{\textbf{Multiscale densities.} Delayed rejection methods achieve low error on multiscale densities.}
    \label{fig:multiscale_densities}
\end{figure}

\begin{figure}[H]
    \centering
    \begin{minipage}{\textwidth}
        \centering
        \includegraphics[width=\linewidth]{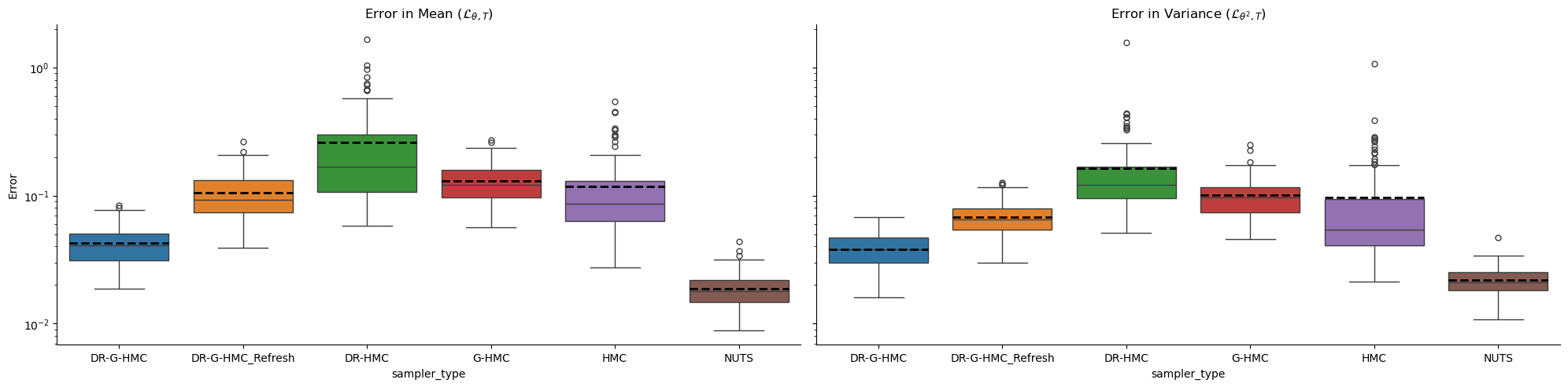}
        \caption*{(a) Normal 100}
    \end{minipage}
    \vspace{10pt} 
    \begin{minipage}{\textwidth}
        \centering
        \includegraphics[width=\linewidth]{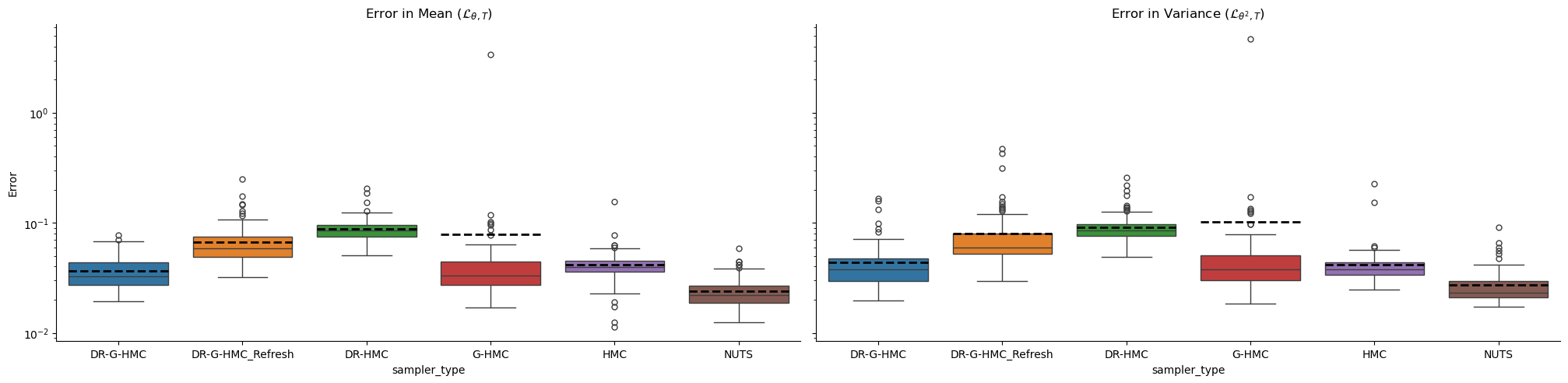}
        \caption*{(b) Item Response Theory}
    \end{minipage}
    \vspace{-10pt} 
     \caption{\textbf{Non-multiscale densities.} DR-G-HMC, unlike DR-HMC, achieves low error on non-multiscale densities.}
    \label{fig:non_multiscale_densities}
\end{figure}

\section{ADDITIONAL TUNING PARAMETER EXPERIMENTS} \label{app:robust}

In this section we investigate how robust DR-G-HMC and DR-HMC are to the tuning of the following hyperparameters: max proposals $K$, step size factor $c$, and reduction factor $r$. See \autoref{sec:drghmc} for details on these hyperparameters.

We run each sampler for $T=10^5$ gradient evaluations on the funnel10, eight schools, and normal100 posteriors and compute the error in mean ($\mathcal{L}_{\theta,T}$) and variance ($\mathcal{L}_{\theta^2,T}$). We find that while both samplers are relatively robust, DR-G-HMC is more robust in multiple settings.

\subsection{Maximum Proposals \texorpdfstring{$K$}{K}}

\begin{figure}[H]
    \centering
    \includegraphics[width=\linewidth]{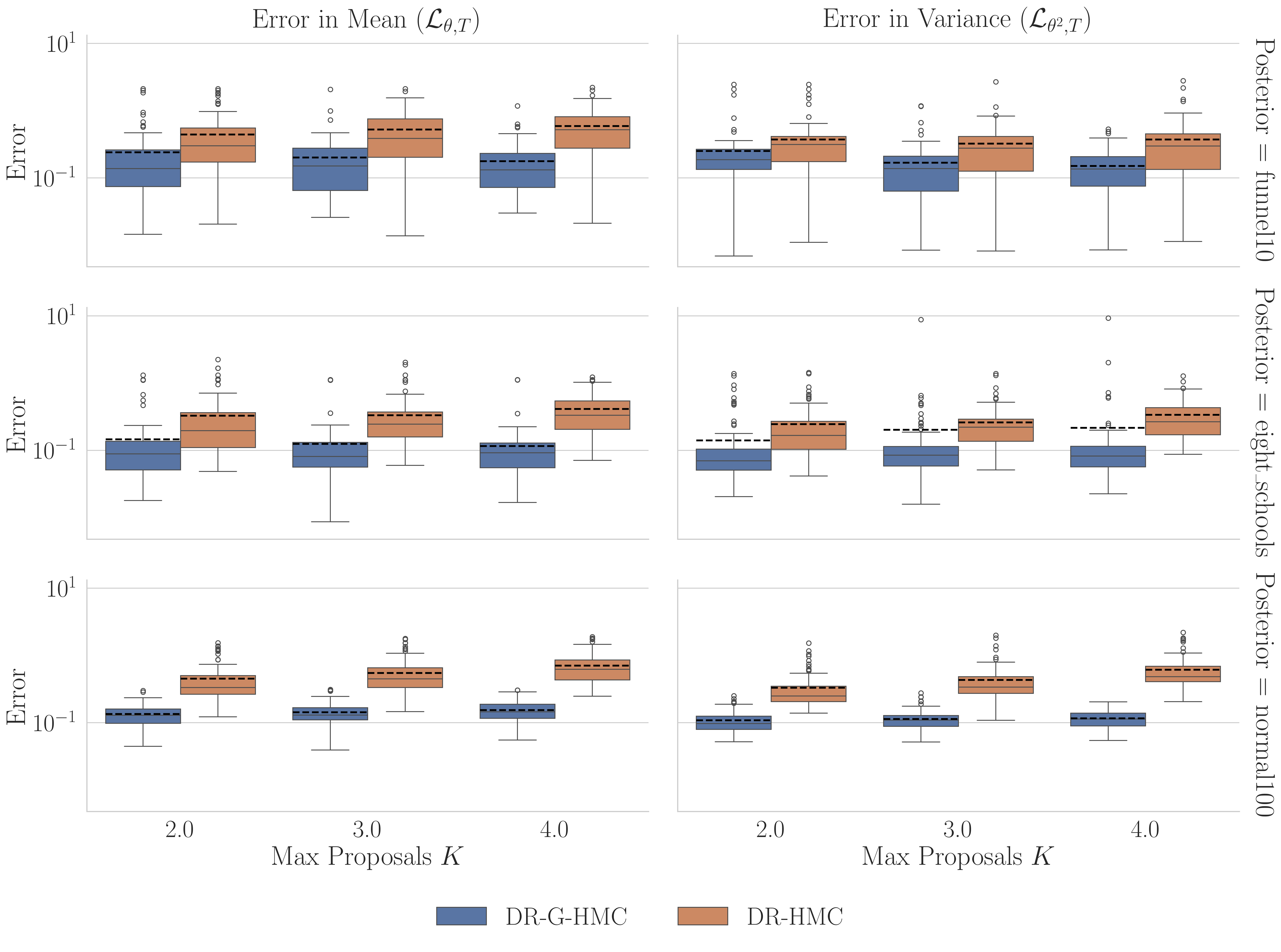}
    \caption{\textbf{DR-G-HMC is robust to the maximum number of proposals attempts $K$.} Error in mean ($\mathcal{L}_{\theta,T}$) and variance ($\mathcal{L}_{\theta^2,T}$) is shown for $100$ chains of various posterior densities. Visual elements represent the following: dashed black line is the mean, solid gray line is the median, colored box is the $(25, 75)$th percentile, whiskers are $1.5$ times the inter-quartile range, and bubbles are outliers.}
    \label{fig:robust_max_proposals}
\end{figure}

We consider maximum proposal attempts $K=[2,3,4]$ in \autoref{fig:robust_max_proposals}. Both DR-G-HMC and DR-HMC errors are mostly constant across different $K$ values. However, the average error across DR-HMC chains (dashed black line) increases with more proposal attempts, especially in the normal100 posterior. DR-G-HMC errors are stable throughout.

\subsection{Reduction Factor \texorpdfstring{$r$}{r}}

\begin{figure}[H]
    \centering
    \includegraphics[width=\linewidth]{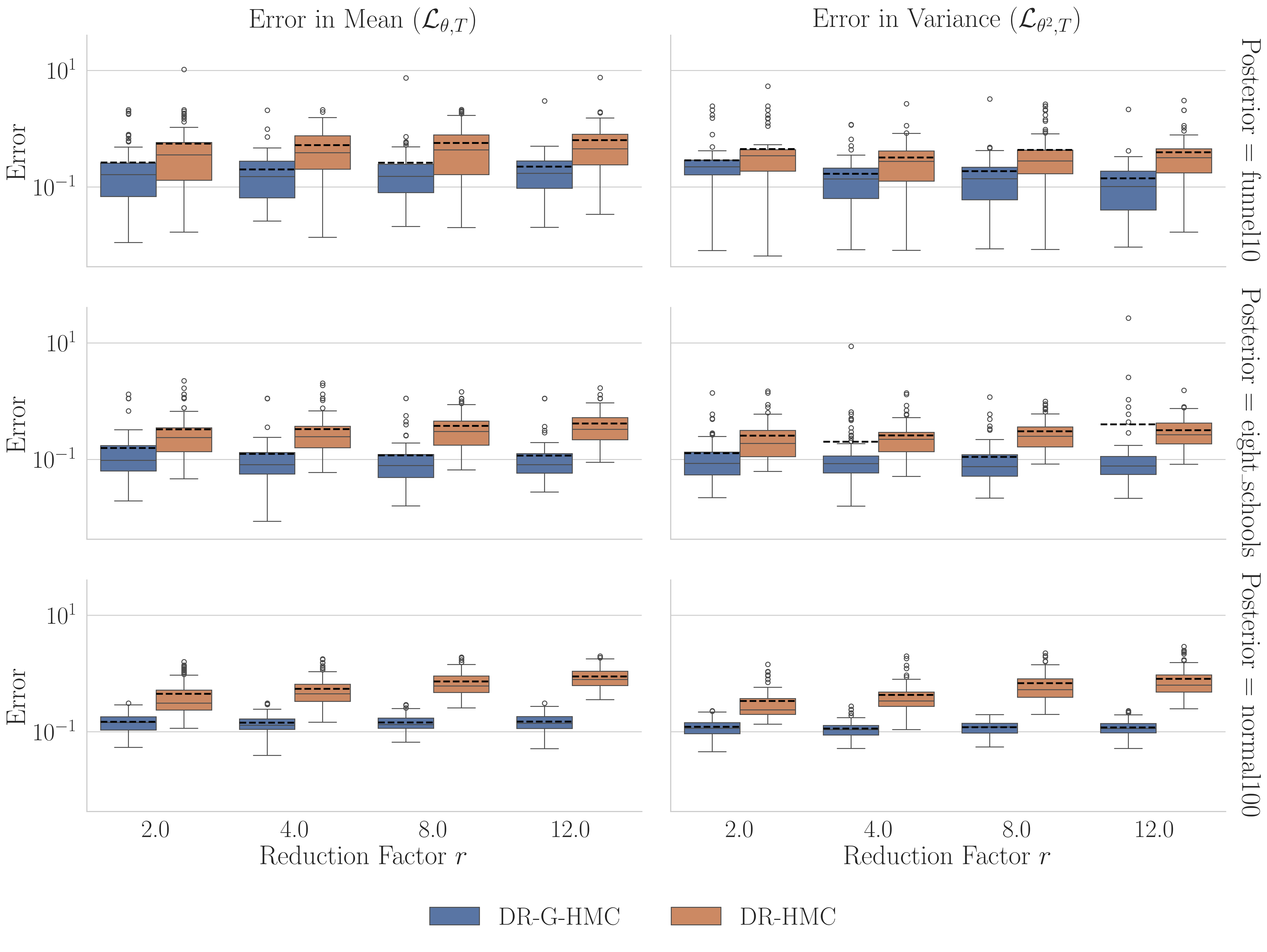}
    \caption{\textbf{DR-G-HMC is robust to tuning of the reduction factor $r$.} Error in mean ($\mathcal{L}_{\theta,T}$) and variance ($\mathcal{L}_{\theta^2,T}$) is shown for $100$ chains of various posterior densities. Visual elements represent the following: dashed black line is the mean, solid gray line is the median, colored box is the $(25, 75)$th percentile, whiskers are $1.5$ times the inter-quartile range, and bubbles are outliers.}
    \label{fig:robust_reduction_factor}
\end{figure}

We consider step size reduction factors $r=[2,4,8,12]$ in \autoref{fig:robust_reduction_factor}. Both DR-G-HMC and DR-HMC errors are relatively stable across different $r$ values. However, the average error across DR-HMC chains (dashed black line) increases with more proposal attempts in the normal100 posterior. DR-G-HMC performance is constant everywhere except for variance of the eight schools posterior. For $r=4,12$, DR-G-HMC has a single outlier that dramatically increase its average error across chains (dashed black line).

\subsection{Step Size Factor \texorpdfstring{$c$}{c}}

\begin{figure}[H]
    \centering
    \includegraphics[width=\linewidth]{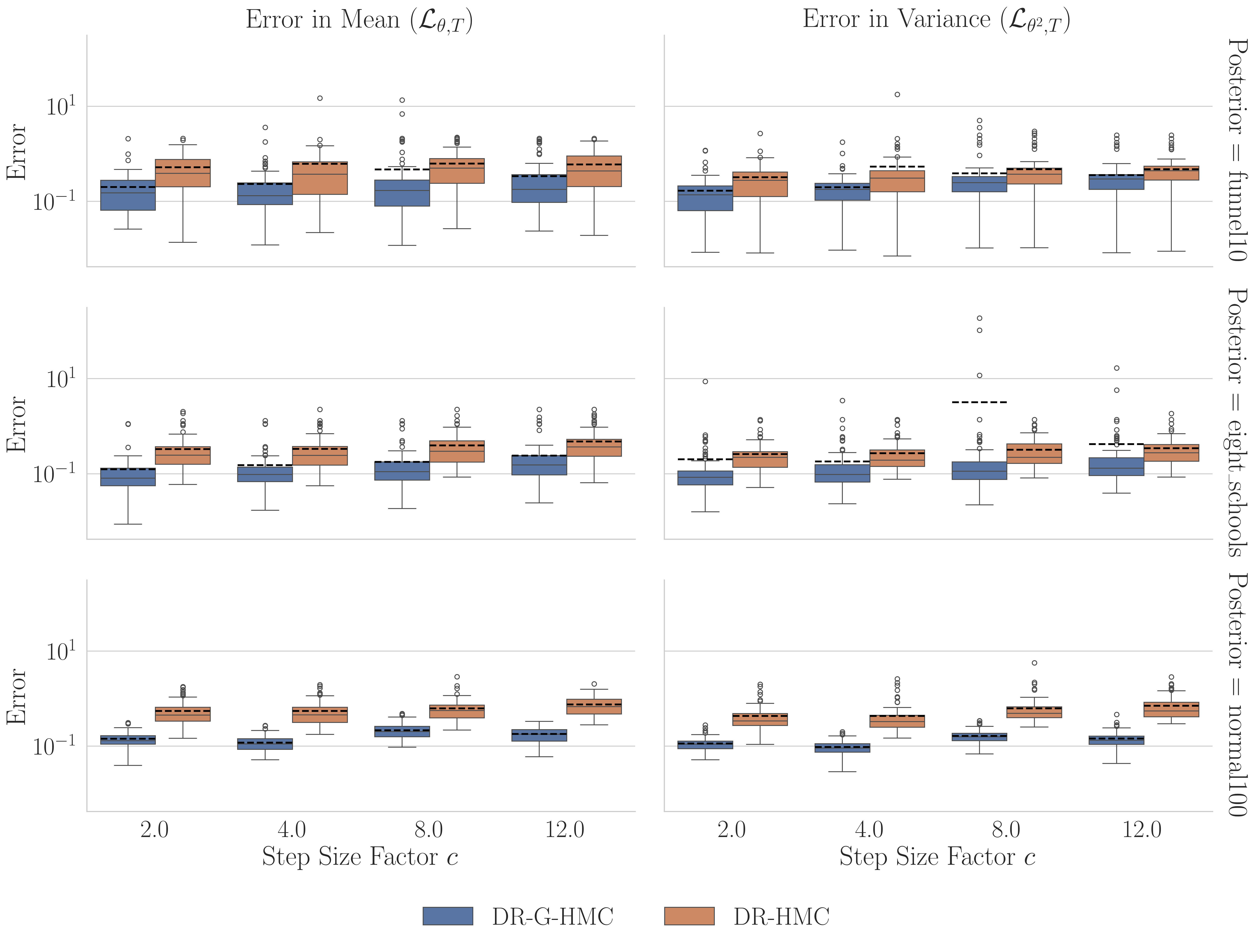}
    \caption{\textbf{DR-G-HMC is robust to tuning of the reduction factor $r$.} Error in mean ($\mathcal{L}_{\theta,T}$) and variance ($\mathcal{L}_{\theta^2,T}$) is shown for $100$ chains of various posterior densities. Visual elements represent the following: dashed black line is the mean, solid gray line is the median, colored box is the $(25, 75)$th percentile, whiskers are $1.5$ times the inter-quartile range, and bubbles are outliers.}
    \label{fig:robust_step_size_factor}
\end{figure}

We consider the step size factor $c=[2,4,8,12]$ in \autoref{fig:robust_step_size_factor} that is used to compute the initial step size $\epsilon = c \epsilon_{\textrm{ NUTS}}$ for  NUTS's adapted step size $\epsilon_{\textrm{ NUTS}}$. The average error across DR-HMC chains (dashed black line) increases with a larger step size factor $c$ in the normal100 posterior. DR-G-HMC performance is constant everywhere except for variance of the eight schools posterior. For $c=8$, DR-G-HMC has a few outliers that dramatically increase its average error across chains (dashed black line).

\section{WALLCLOCK RUNTIME OF SAMPLERS} \label{app:runtime}

Profiling experiments reveal the runtime of our DR-G-HMC sampler is trivially longer than the runtime of DR-HMC. The overhead of DR-G-HMC arises from resampling the momentum at every iteration (not from computing the recursive acceptance probability).

For a fair comparison, we investigate $n$ iterations of DR-G-HMC and a single iteration of DR-HMC with  $n$ leapfrog steps. In this setup, both samplers compute $n$ gradients of the log density $\nabla \log p(x)$. Because the gradient is computed with automatic-differentiation, we get the log density $\log p(x)$ for free -- this is later used to cheaply compute the acceptance probability.

However, DR-G-HMC resamples the momentum $\rho$ from a normal distribution $n$ times while DR-HMC only does so once. This is the source of the minor runtime differences between DR-G-HMC and DR-HMC.

The exact slowdown of DR-G-HMC depends on the relative amount of time between resampling the momentum $\rho \sim \textrm{normal}(\cdot, \cdot)$ vs computing the gradient of the log density $\nabla \log p(x)$. This is impacted by the number of leapfrog steps $n$ and how long it takes to compute the gradient of the log density $\nabla \log p(x)$. The latter involves differentiating through the prior and likelihood, which can be arbitrarily complex.

\autoref{tab:runtime} shows runtime for DR-G-HMC is marginally slower than DR-HMC for Neal's Funnel and IRT-2PL target densities. However, as discussed in the main paper, DR-G-HMC outperforms DR-HMC when sampling from non-multiscale densities because of its algorithmic advantages. We thus expect that on most real-world densities DR-G-HMC will achieve a low error faster than DR-HMC.

\begin{table*}
\centering
\begin{tabular}{c c c}
\toprule
     & \textbf{Neal's Funnel ($d=250$)} & \textbf{IRT-2PL ($d=144$)} \\
    \midrule
    \textbf{DR-G-HMC} & $0.0012 (\pm 0.0003)$ & $0.0019 (\pm 6e-05)$ \\
    \textbf{DR-HMC} & $0.0009 (\pm 0.0037)$ & $0.0013 (\pm 1.6e-4)$ \\
\bottomrule \\
\end{tabular}
\caption{\textbf{Wallclock runtime of DR-G-HMC and DR-HMC samplers.} Mean and standard deviation of a single iteration are reported, averaged over $1000$ trials and measured in seconds.}
\label{tab:runtime}
\end{table*}

\end{document}